\title{Causal Inference Using Augmented Epidemic Models}
\author{ 
    Heejong Bong \\
    Department of Statistics\\
    Purdue University\\
    West Lafayette, IN 47907 \\
    \texttt{bong0@purdue.edu} \\
    \And
    Valérie Ventura \qquad Larry Wasserman \\
    Department of Statistics \& Data Science \\
    Carnegie Mellon University\\
    Pittsburgh, PA 15213 \\
    \texttt{\{vventura, larry\}@stat.cmu.edu} \\
    \AND
}
\begin{document}


\maketitle

\begin{abstract}
Epidemic models describe the evolution of a communicable disease over time. These models are often modified to include the effects of interventions (control measures) such as vaccination, social distancing, school closings etc. Many such models were proposed during the COVID-19 epidemic. Inevitably these models are used to answer the question: What is the effect of the intervention on the epidemic? These models can either be interpreted as data generating models describing observed random variables or as causal models for counterfactual random variables. These two interpretations are often conflated in the literature. We discuss the difference between these two types of models, and then we discuss how to estimate the parameters of the model. Our focus is causal inference for parameters in epidemic models by adjusting for confounders, allowing time varying interventions.
\end{abstract}

\keywords{G-null paradox, Estimating equations, Marginal structural models}


\section{Introduction} \label{sec::intro}

In this paper we consider the problem of inferring
the causal effects of time-varying interventions $A_t$ in epidemics.
The term {\em intervention} can refer to
control measures, treatments, public health policies or 
spontaneous changes in population behavior such as reduced mobility. 
Such interventions often
change over time, often depending on the state of the epidemic.
For example, we may want to estimate the effect
of vaccinations, masks or social mobility
on the number of infections
or hospitalizations $Y_t$ over time.

{We consider the situation when only one epidemic time series $Y_t$ has been observed,
for example, Covid deaths in a single US state.
In this case, model free causal effects, such as the difference of mean 
effects between treated and non-treated groups, cannot be used
so some sort of modeling assumption is required.
A common approach in causal inference is to assume 
a {\em marginal structural model (MSM, \citealp{robins2000marginal})}, 
which is a type of causal model for time varying
inference that describe how 
the intervention affects the outcome.
Furthermore, the assumed MSM is typically simple and easily interpretable.
For 
example, letting 
$Y_t(a_1,\ldots, a_t)$ 
denote the outcome 
that would be obtained at time $t$
if the intervention variables
$A_1,\ldots, A_t$ were set to
$a_1,\ldots, a_t$,
we might use the model
\begin{equation}
\label{eq::MSM0}
\E[Y_t(a_1,\ldots, a_t)] = \beta_0 + \beta_A \sum_{j=1}^{t} a_j,
\end{equation}
which says that the expected counterfactual outcome 
$Y_t(a_1,\ldots, a_t)$ at time 
$t$ is a linear function of 
the cumulative dose $\sum_{s=1}^t a_s$ up to $t$.}
An interesting question is: if an epidemic model 
is used to describe the data generating process,
how can we use it to guide the choice of a MSM?
    
{
An example of an epidemic model is the semi-mechanistic model of 
\citet{bhatt}:
\begin{equation} 
\label{eq::bhatt}
\begin{aligned}
\Exp[I_t | \bar{A}_t, \bar{I}_{t-1}, \bar{Y}_{t-1}] &= R_t \sum_{s<t}  g_{t-s} I_s, \\
\Exp[Y_t | \bar{A}_t, \bar{I}_t, \bar{Y}_{t-1}] &= \alpha_t \sum_{s<t} \pi_{t-s} I_s,
\end{aligned}
\end{equation}
where 
$I_t$ are the unobserved infections at $t$, $Y_t$ are the observed hospitalized, deaths or cases,
$\alpha_t$ and $R_t$ are the ascertainment rate and reproduction number,
$\pi$ is the infection to death distribution and
$g$ is the generating distribution.
To study the effect of an intervention $A_t$ on the epidemic,
\citet{bhatt} modeled $R_t$ as
\begin{equation} 
\label{eq::bhatt_R}
R_t \equiv R(\bar{A}_t, \beta) = \frac{K}{1 + \exp(\beta_0 + \beta_A A_t)},
\end{equation}
where $K$ is the maximum transmission rate.
We refer to the model in \cref{eq::bhatt,eq::bhatt_R} as {\em augmented}, since it is extended to incorporate the causal effect of $A_t$.
Other augmented models that have been used in the literature are described in Section~\ref{sec::epi}.
}

{We then face the following questions: how should we interpret
an augmented model? How should we estimate the parameters of the model?
How should we use the model?}
Augmented epidemic models have been used both as data generating models (DGMs)
and as MSMs. 
{}{For example, \cite{bhatt} and \cite{bong2024} produce scenario predictions from the 
model in \cref{eq::bhatt,eq::bhatt_R} -- thereby treating it like
an MSM --
after they fit it 
to data using the likelihood function in a Bayesian or
frequentist framework, respectively -- thereby treating it like a DGM. }
But DGMs and MSMs are, in general, not the same.
For example,
suppose we observe data
$(A_t,Y_t)$ at two time points $t=1$ and $t=2$.
In the data generating interpretation, the model characterizes the conditional density of the outcome $Y_2$, which by the law of total probability can be written as
\begin{equation}\label{eq::qwe1}
p(y_2|a_1,a_2) =
\int p(y_2|a_1,y_1,a_2) p(y_1|a_1)
\frac{p(a_2|a_1,y_1)}{p(a_2|a_1)} dy_1.
\end{equation}
In the causal interpretation, the model characterizes the density of the \emph{counterfactual} random variable $Y_2(a_1,a_2)$, which represents the value $Y_2$
would take if
$(A_1,A_2)$ had instead been equal to
$(a_1,a_2)$.
As we explain in Section \ref{sec::causal},
the density of the counterfactual $Y_2(a_1,a_2)$ 
is
\begin{equation}\label{eq::qwe2}
p(y_2(a_1,a_2)) = \int p(y_2|a_1,y_1,a_2) p(y_1|a_1) dy_1,
\end{equation}
and we see that
$p(y_2|a_1,a_2) \neq p(y_2(a_1,a_2))$.

{The problem with using augmented epidemic models both as DGMs to estimate their parameters, and as MSMs 
to produce downstream causal inferences such as scenario predictions or estimating causal effects, 
is that these inferences are inconsistent.}        
We provide three ways to fix this problem
(Methods 2, 3 and 4). 
{For completeness, we call Method 1 the approach we described above.}
\begin{itemize}
    \item[\bf Method 1.]
    The augmented epidemic model is used as a DGM for the observed outcome $Y$, and 
    its parameters are estimated by maximum likelihood or Bayes. 
    This may seem like the most natural approach
    but, as we will explain, it should be avoided because
    it leads to the $g$-null paradox and yields
    inconsistent causal estimators.
    The reason is that, in these models,
    correlation and causation are entangled
    and cannot be separated.
    \smallskip
    \item[\bf Method 2.]
    As in Method 1, the augmented epidemic model is used as a DGM.
    The causal effect is extracted from the DGM using the
    $g$-formula (see \cref{eq::g3}) and the model parameters are then estimated
    using estimating equations (see \cref{eq::ee}). 
    \smallskip
    \item[\bf Method 3.] 
    The augmented epidemic model is used as an MSM, 
    that is as a model for 
    the counterfactual $Y(a)$.
    The parameters are estimated using estimating equations.
    Method 3 is the simplest approach and is 
    standard in causal inference.
    \smallskip
    \item[\bf Method 4.]
    As with Method 3,
    we use the augmented epidemic model as an MSM for 
    the counterfactual $Y(a)$.
    Then, in contrast to Method 3, we construct a full DGM
    in a way that is consistent with the MSM.
    Then we can apply maximum likelihood
    to the full model and obtain consistent estimates. 
    In terms of model construction,
    this is the most technically challenging approach.
\end{itemize}
Methods 2, 3 and 4 all produce valid downstream causal inference. In our view, 
Method 3 is the simplest and most natural, and accords with common practice in causal inference.
{An other advantage of Method 3 is that it handles confounders more easily, 
because they are added to the propensity score, 
which is an ingredient in the estimating equations.
In contrast, for Methods 1, 2 and 4, confounders must be included in the DGM, 
which requires more modeling assumptions.}
Ultimately, we recommend Method 3
but we shall consider all four approaches.

\subsection{Related Work}

The literature on causal inference is vast.
Good references for background include
\citet{whatif,imbens2015,pearl2009}.
Of particular relevance is
\citet{robins2000marginal}
which defines MSMs.
We will mainly be concerned with causal inference
from a single time series because that's how most epidemic data arise; some of the challenges
in such settings have been discussed in 
\citet{cai2024causal}. 

The literature on epidemic modeling is also very large.
However, papers dealing with epidemic models
using explicit causal methods are less common.
\citet{halloran1995} deals explicitly with
infectious diseases in a causal framework
and considers violations of the ``no interference'' assumption
in which one subject's outcome can be affected by another
subject's intervention.
\cite{bhatt, flaxman2020} study the effect of various interventions
during the Covid pandemic using
the semi-mechanistic model in \cref{eq::bhatt,eq::bhatt_R}.
\citet{bonvini2022}
consider the effect of social mobility
on deaths for the Covid pandemic.
They use the approach described in this paper (Method 3).
\cite{ackley2017}
consider how causal graphs
might be used in the context of SIR models.
{\cite{feng2024}
study a particular causal method, known as differences in differences,
in the context of SIR models.
\cite{barrero2025} show how
causal structural models
can be used to help
formulate the study of the causal effects of climate
on infectious disease transmission.
\cite{callaway2023policy} consider multiple time series
corresponding to different locations
with a binary intervention that occurs at a single time.
A SIR model is used to motivate certain assumptions but the
method estimates the causal effect
by comparing cases in treated locations to the estimated outcome if they had not
been treated, using a regression model and propensity weights.
The estimator is quite simple
and does not rely heavily on the SIR model thanks to the information
from multiple locations and the simplicity of the intervention.}

Papers that discuss the $g$-null paradox problem
when treating time varying data generating models
as MSMs
include
\citet{robins1986new,rw,bates2022causal,robins2000b}.

{We emphasize that our focus in this paper is on adjustment methods,
that is, methods that adjust for observed confounders.
There are, of course, other approaches such as
differences-in-differences, regression discontinuity designs, 
instrumental variables, etc.}

\subsection{Paper Outline}

Section~\ref{sec::epi} presents examples of augmented epidemic models that have been used in the literature.
In \cref{sec::causal} we provide a brief
background for causal inference.
In \cref{sec::phantoms} we explain why the ubiquitous $g$-null paradox phenomenon 
arises when Method 1 is used and 
how Methods 2, 3 and 4, {described in~\cref{sec::extraction,sec::estimation,section::full}}, 
provide remedies to it. 
Finally, we present examples based on simulated and observational data 
in \cref{sec::example}, brush on the problem of model misspecification in 
\cref{sec::misspec} and conclude in \cref{sec::conclusion}.

{
The three main takeaways are:
(i) When assessing
the effect of interventions in times series,  
maximum likelihood and Bayes estimates of model parameters are typically inconsistent and subsequent causal effect estimates are inconsistent too.
Hence, MLEs and Bayes estimates should not be used.
(ii) Three alternative methods could be used instead. One of them, Method 3, is easier to implement because it requires fewer model assumptions.
(iii) In the context of Method 3, the epidemic model augmented with interventions provides an interpretable marginal structural model.
}

\subsection{{Definitions and Terminology}}

\subsubsection{Augmented Models}
Suppose we are given a baseline epidemic model
$p(\overline{y}_t;\zeta)$
which correctly describes the joint density of
$\overline{y}_t$ in the absence of an intervention.
An augmented model is a family of densities
$p(\overline{y}_t,\overline{a}_t;\zeta,\beta_A)$
with two properties:
\begin{itemize}
    \item[(1)] If $\beta_A=0$ then 
    $p(\overline{y}_t,\overline{a}_t;\zeta,\beta_A) = p(\overline{y}_t;\zeta)$.
    \smallskip
    \item[(2)] If $\overline{a}_t = (0,\ldots, 0)$ then
    $p(\overline{y}_t,\overline{a}_t;\zeta,\beta_A) = p(\overline{y}_t;\zeta)$,
    assuming that $a_t=0$ corresponds to no intervention.
\end{itemize}
These conditions imply that when there is no intervention we get back the
original model.
In particular, under the null hypothesis of no causal effect,
we have $\beta_A=0$ and the model
reduces to the baseline epidemic model.
We write $\theta = (\zeta,\beta_A)$ in what follows.

\subsubsection{Causal Graphs} 
We will sometimes use directed graphs to illustrate models
where arrows denote causal relationships.
In \cref{fig::dagsemi0}, $Y_t$ denotes an observed outcome (such as deaths), $A_t$ denotes the intervention of interest (such as public health policies),
and $X_t$ denotes confounders.
Latent variables are indicated with pink nodes.
Importantly, we allow phantom variables $U$.

\subsubsection{Phantom Variables}
Phantom variables are unobserved variables
that affect $Y_t$, and possibly all other variables, 
but that do not directly affect $A_t$, so they are not confounders. 
They were initially introduced by \citet{robins1986new} 
and later named ``phantoms'' by \citet{bates2022causal}. 
They play an important role in causal inference because
they are the source of 
the $g$-null paradox,
as we will explain in \cref{sec::phantoms}.

{Phantoms are ubiquitous.}
For example, air quality $U$ might affect deaths $Y$ from COVID-19 but will
not likely affect mobility $A$ (unless $U$ is extreme).
{Other factors like healthcare system capacity, pre-existing health conditions, 
population age distribution, viral strain characteristics, seasonality, 
socioeconomic status, and public health messaging may influence death rates 
without directly affecting behaviors such as mobility or mask-wearing.
In all these cases, the unobserved variables are not direct causes of 
treatment, but they shape outcome dynamics and can complicate causal 
interpretation in time-series settings.}

\begin{figure}[t!]
    \begin{center}
    

    \begin{tikzpicture}[->, shorten >=2pt,>=stealth, 
        node distance=1cm, noname/.style={ ellipse, minimum width=5em, minimum height=3em, draw } ]
        \node (0) {$\cdots$};
        \node (1) [right= of 0] {$X_{t-1}$};
        \node (2) [right= of 1] {$A_{t-1}$};
        \node (3) [right= of 2] {$Y_{t-1}$};
        \node (4) [right= of 3] {$X_t$};
        \node (5) [right= of 4] {$A_t$};
        \node (6) [right= of 5] {$Y_t$};
        \node (7) [right= of 6] {$\cdots$};
    
        \node[circle,fill=red!20,draw,minimum size=30pt,inner sep=0pt] (8) [below= of 2] {$U_{t-1}$};
        \node[circle,fill=red!20,draw,minimum size=30pt,inner sep=0pt] (9) [below= of 5] {$U_t$};
    
        \path (0) edge [thick] node {} (1);
        \path (1) edge [thick] node {} (2);
        \path (1) edge [thick, bend left=50pt] node {} (3);
        \path (1) edge [thick, bend left=50pt] node {} (4);
        \path (1) edge [thick, bend left=50pt] node {} (5);
        \path (1) edge [thick, bend left=50pt] node {} (6);
        \path (2) edge [thick] node {} (3);
        \path (2) edge [thick, bend left=50pt] node {} (4);
        \path (2) edge [thick, bend left=50pt] node {} (5);
        \path (2) edge [thick, bend left=50pt] node {} (6);
        \path (3) edge [thick] node {} (4);
        \path (3) edge [thick, bend left=50pt] node {} (5);
        \path (3) edge [thick, bend left=50pt] node {} (6);
        \path (4) edge [thick] node {} (5);
        \path (4) edge [thick, bend left=50pt] node {} (6);
        \path (5) edge [thick] node {} (6);
        \path (6) edge [thick] node {} (7);
        
        \path (8) edge [thick] node {} (1);
        \path (8) edge [thick] node {} (3);
        \path (8) edge [thick] node {} (9);
        \path (9) edge [thick] node {} (4);
        \path (9) edge [thick] node {} (6);
    \end{tikzpicture}
            
    %

    
    
    \end{center}

\caption{\sl {\bf Example DAG for epidemic data. } 
The arrows indicate possible
causal relationships between the outcome $Y$, 
intervention $A$ and confounders $X$.
Latent variables $U$ are in pink;
$U$ does not directly affect $A$ -- 
we say that $U$ is a phantom variable.
If there were arrows from $U$ to $A$ then $U$ would instead be a confounder.}
\label{fig::dagsemi0}
\end{figure}

\section{{Background on Epidemic Models} \label{sec::epi}}

{We review some epidemic models
with attention to how interventions have been, or can be, included. 
In~\cref{sec::estimation} we argue that such augmented 
models can be used as MSMs, which is a type of causal models for
time varying inference that describe how the intervention 
affects the outcome.}

\begin{example}[SIR Model]
{
Perhaps the most fundamental epidemic model
is the SIR model
(Susceptible-Infected-Recovered)
due to \citet{kermack1927contribution}.
The model is given by three differential equations
\begin{equation}\label{eq::SIR}
\begin{aligned}
\frac{dS_t}{dt} &= -\frac{\alpha I_t S_t}{N}, \\
\frac{dI_t}{dt} &= \frac{\alpha I_t S_t}{N}-\gamma I_t, \\
\frac{dR_t}{dt} &= \gamma I_t,
\end{aligned}
\end{equation}
for $t >0$, where $S_t$, $I_t$ and $R_t$ are the numbers of susceptibles,
infected, and removed (deaths or recovered) 
at $t$, $N =S_t + I_t +R_t$ is the total population size, 
and $\alpha$ and $\gamma$ are the rates of infection and of removal, respectively.
To study the effect of an intervention on the outcome,
$\alpha$ could be replaced by
$\alpha_t = \alpha e^{\beta_A A_t}$,
where $\beta_A$ is the parameter that modulates 
the effect of $A_t$ on the subsequent number of infections $I_t$.
Confounders $X_t$ can be added in a similar way, for example, by setting
$\alpha_t = \alpha e^{\beta_A A_t + \beta_X^T X_t}$.
}

{
\cite{n2022effect} include a covariate $X_t$ in a SIR model 
by modeling the rate of infection as 
$$
\alpha_t = \frac{\beta_0 }{1+\gamma I X_t},
$$
where $X_t$
measures the level of adherence to disease control measures at time $t$. 
}

{
\cite{chernozhukov2020},
evaluated the effects of
policies on Covid 19 using a system
of linear structural equation models
which included
policies and  behaviors.
They did not directly use an epidemic model.
Rather, they use properties of the SIR model as motivation
the form of their model.
}
\end{example}

{
\begin{example}[Discrete SEIR Model]
\label{ex::seir}
\cite{lekone2006statistical,gibson1998estimating,mode2000stochastic}
proposed a discrete SEIR model
which models
the numbers of susceptibles $S_t$,
exposed $E_t$,
infected $I_t$
and the cumulative number of removed up to time $t$, $R_t$, by
\begin{equation} \label{eq::seir}
\begin{aligned}
    S_{t+h} &= S_t - B_t, \\
    E_{t+h} &= E_t + B_t - C_t, \\
    I_{t+h} &= I_t + C_t - D_t, \\
    R_{t+h} &= R_t + D_t, 
\end{aligned}
\end{equation}
where
$h$ represents the time interval
(e.g. $h=1$ day),
$B_t \sim \text{Binomial}(S_t,p_{B,t})$ is the number of susceptibles
who become infected,
$C_t \sim \text{Binomial}(E_t,p_C)$ is the number of new cases and  
$D_t \sim \text{Binomial}(I_t,p_D)$ is the number of newly removed cases.
The parameters are expressed as 
$$
p_{B,t} = 1 - \exp\Bigl\{ - \frac{\eta_t}{N} h I_t\Bigr\},\ \ 
p_C = 1 - e^{-\rho h}, \ \ p_D = 1 - e^{-\gamma h},
$$
where $\eta_t$ is the time-dependent transmission rate, $1/\rho$ is the mean incubation period, $1/\gamma$ is the mean infectious period, 
and $S_t + E_t + I_t + R_t = N$ is the total population size.
Again, we might observe $C_t$ or $I_t$ -- the other variables being latent -- 
or we might observe a variable $Y_t$ related to $I_t$.
These papers did not explicitly include interventions
but to include an intervention $A_t$,
we could
replace $\eta_t$ with 
$\eta(\bar{A}_t; \beta) = e^{- \beta_0 - \beta_A A_t}$.
\end{example}
}

{
\begin{example}
\cite{grigorieva2015optimal}
use a modified SEIR model
\begin{align*}
\dot{S}(t) &= -N^{-1} [u(t)I(t)+ v(t)E(t)] S(t)\\
\dot{E}(t) &= N^{-1} [u(t)I(t)+ v(t)E(t)] S(t)-\delta E(t)\\
\dot{I}(t) &= \delta E(t)-\gamma I(t)
\end{align*}
where
$u(t)$ and $v(t)$ model continuous time effects of interventions 
on infected and on exposed.
\end{example}
}

{
\begin{example}
\cite{GIFFIN2022100711}
consider a discretized spatial, SIR model
\begin{align*}
S_j(t+1) - S_j(t) &= -\lambda_j(t)\\
I_j(t+1) - I_j(t) &= \lambda_j(t)-\gamma I_j(t)\\
R_j(t+1) - R_j(t) &= \gamma I_j(t)
\end{align*}
where
$$
I_j(t) = \beta_j(t)\frac{S_j(t)}{N}\sum_{k=1}^K W_{jk}I_k(t),
$$
$W_{jk}$ is the contact rate between regions $j$ and $k$
and
$$
\log \beta_j(t) =\alpha_0 +X_j(t)^T \alpha_1 + A_j(t) \delta
$$
where $A_j(t)$ is mobility and are $X(t)$ covariates.
\end{example}
}

\section{Background on Causal Inference\label{sec::causal}}

Putting aside epidemic models
for a moment, we now review some background
on causal inference.

First, consider a single outcome $Y$ and
a binary intervention $A \in \{0,1\}$. 
The {\em counterfactual} $Y(a)$ is the value
the outcome $Y$ would take if
the intervention $A$ were set to $a$.
Thus,
we now have four random variables
$(A,Y,Y(0),Y(1))$
where
$Y(0)$ is the value $Y$ would have if $A=0$
and
$Y(1)$ is the value $Y$ would have if $A=1$.
The counterfactuals $Y(0)$ and $Y(1)$ are linked
to the observed data $(A,Y)$ by the equation
$Y= Y(A)$.
If $A=1$ then $Y = Y(1)$ but $Y(0)$ is unobserved.
If $A=0$ then $Y = Y(0)$ but $Y(1)$ is unobserved.
Many causal questions are quantified by
these counterfactuals.
For example,
$\E[Y(1)] - \E[Y(0)]$
is used to quantify the
causal effect
of the intervention.
It can be shown that
if there are no confounding variables ---
variables that affect both $Y$ and $A$ ---
then
$P(Y\leq y|A=a) = P(Y(a)\leq y)$
so that the distribution of
the counterfactual $Y(a)$
is the same as 
the conditional distribution
of the observable $Y$ given $A$,
and thus $\E[Y(a)] = \E[Y | A=a]$.
But if there are confounding variables $X$ then
$P(Y\leq y|A=a) \neq P(Y(a)\leq y)$ and 
thus $\E[Y(a)] \neq \E[Y | A=a]$.
In this case, 
it can be shown 
(under Conditions C1, C2 and C3, described below)
that
\begin{equation}
P(Y(a) \leq y) = \int P(Y\leq y|A=a,X=x) dP(x).
\label{eq::babyg}
\end{equation}
Thus we can derive the distribution of $Y(a)$
from the distribution for $(X,A,Y)$
using \cref{eq::babyg}.

Now consider observed time series data of the form
$$
(X_1,A_1,Y_1), \ldots, (X_T,A_T,Y_T),
$$
where
$A_t$ is some intervention at time $t$,
$Y_t$ is the outcome of interest at $t$
and $X_t$ refers to potential confounding variables
which might affect $A_t$ and $Y_t$
(and future values).
As above, 
the causal effect of the intervention can be quantified by the counterfactual 
$Y_t(\overline{a}_t)$,
which is the value $Y_t$
would have if
a hypothetical intervention sequence was 
$\overline{a}_t = (a_1,\ldots, a_t)$
rather than the actual observed sequence
$\overline{A}_t = (A_1,\ldots, A_t)$.
For example, suppose that
$a_t=1$ means that there is a mandate to wear masks
and
$a_t=0$ means that there is no mask mandate.
Then
$Y_t(0,0,\ldots, 0)$
is the outcome at time $t$ if there was never a mask mandate.
In some literature,
$\E[ Y_T(\overline{a}_T)]$ is denoted by
$\E[Y_T| {\rm do}(\overline{a}_T)]$.

Confounders are generally present in the time series setting
because $Y_s$ typically affects future values $Y_t$ 
and $A_t$ for $t>s$, as depicted in \cref{fig::dagsemi0}.
Then
causal and association effects are different,
e.g.
$
\E[Y_t(\overline{a}_t)] \neq \E[Y_t| \overline{A}_t = \overline{a}_t].
$
\citet{robins1986new} proved that
$$
\E[Y_t(\overline{a}_t)] = \psi(\overline{a}_t),
$$
where
\begin{equation}\label{eq::g3}
    \psi(\overline{a}_t) \equiv \int \cdots \int \E[ Y_t | \overline{x}_{t},\overline{y}_{t-1},\overline{a}_{t}] 
    \prod_{s=1}^t p( x_s,y_s | \overline{x}_{s-1},\overline{a}_{s-1},\overline{y}_{s-1}) dx_s dy_s,
\end{equation}
provided 
Conditions (C1)-(C3) below are met.
In what follows,
we will often write
$\psi(\overline{a}_t; \theta)$
where $\theta$ denotes any parameters that are involved.
\cref{eq::g3} is known as the
{\em $g$-formula} for the mean causal effect,
but there are similar expressions
for densities, cdf's, quantiles etc.
In particular, 
let $p_{\overline{a}_t}(y_t)$ 
denote the density of counterfactual $Y_t(\overline{a}_t)$
evaluated at $y_t$. Then
\begin{equation}\label{eq::g4}
    p_{\overline{a}_t}(y_t)=
    \int \cdots \int p(y_t | \overline{x}_{t}, \overline{a}_t, \overline{y}_{t-1})
    \prod_{s=1}^t p(x_s, y_s | \overline{x}_{s-1}, \overline{a}_{s-1}, \overline{y}_{s-1}) dx_s dy_s,
\end{equation}
which extends \cref{eq::babyg} to the time series setting.

The $g$-formula has a graphical interpretation.
Starting with a directed graph $G$ such as \cref{fig::dagsemi0},
form a new graph $G^*$ in which all arrows pointing into any $A_s, s \leq t,$ are removed
and in which any $A_s$ is fixed at a value $a_s$; see \cref{fig::intervene}.
\cref{eq::g4} is then the marginal density for $Y_t$
corresponding to the density in the graph $G^*$.

\begin{figure}[t!]
    \begin{center}
    \begin{tikzpicture}[->, shorten >=2pt,>=stealth, 
        node distance=1cm, noname/.style={ ellipse, minimum width=5em, minimum height=3em, draw } ]
        \node (0) {$\cdots$};
        \node (1) [right= of 0] {$X_{t-1}$};
        \node (2) [right= of 1] {\color{blue} $a_{t-1}$};
        \node (3) [right= of 2] {$Y_{t-1}$};
        \node (4) [right= of 3] {$X_t$};
        \node (5) [right= of 4] {\color{blue} $a_t$};
        \node (6) [right= of 5] {$Y_t$};
        \node (7) [right= of 6] {$\cdots$};
    
        \node[circle,fill=red!20,draw,minimum size=30pt,inner sep=0pt] (8) [below= of 2] {$U_{t-1}$};
        \node[circle,fill=red!20,draw,minimum size=30pt,inner sep=0pt] (9) [below= of 5] {$U_t$};
    
        \path (0) edge [thick] node {} (1);
        \path (1) edge [thick, bend left=50pt] node {} (3);
        \path (1) edge [thick, bend left=50pt] node {} (4);
        \path (1) edge [thick, bend left=50pt] node {} (6);
        \path (2) edge [thick] node {} (3);
        \path (2) edge [thick, bend left=50pt] node {} (4);
        \path (2) edge [thick, bend left=50pt] node {} (6);
        \path (3) edge [thick] node {} (4);
        \path (3) edge [thick, bend left=50pt] node {} (6);
        \path (4) edge [thick, bend left=50pt] node {} (6);
        \path (5) edge [thick] node {} (6);
        \path (6) edge [thick] node {} (7);
        
        \path (8) edge [thick] node {} (1);
        \path (8) edge [thick] node {} (3);
        \path (8) edge [thick] node {} (9);
        \path (9) edge [thick] node {} (4);
        \path (9) edge [thick] node {} (6);
    \end{tikzpicture}
            
    \end{center}
    \caption{\sl {\bf Intervention graph} from \cref{fig::dagsemi0} after setting
    $\overline{A}_t=\overline{a}_t$.}
    \label{fig::intervene}
    \end{figure}

The $g$-formula is valid under three conditions:
\begin{itemize}
    \item[(C1)] 
    No interference:
    if
    $\overline{A}_t =\overline{a}_{t}$ 
    then
    $Y_t(\overline{a}_t) = Y_t$.    
    \item[(C2)] 
    Positivity:
    there exists $\epsilon > 0$ such that $\pi(a_t| \overline{x}_{t}, \overline{a}_{t-1}, \overline{y}_{t-1}) > \epsilon$
    for all values of $\overline{x}_t$, $\overline{a}_t$ and $\overline{y}_{t-1}$,
    where
    $\pi(a_t| \overline{x}_{t},\overline{a}_{t-1}, \overline{y}_{t-1})$
    is the density of $A_t$ given the past.    
    \item[(C3)]
    No unmeasured confounding:
    the variable
    $Y_t(\overline{a}_t)$ is independent of
    $A_t$ given the past measured variables.
\end{itemize}
Condition (C1) means that the observed $Y_t$
is equal to the counterfactual
$Y_t(\overline{a}_t)$
if the observed intervention sequence
$\overline{A}_t$
happens to equal $\overline{a}_t$.
This means a subject's outcome is affected
by their intervention but not affected
by another subject's intervention.
{This assumption can be violated if there are spillover
effects from one subject to another.
For example, in the case of vaccines, a subjects outcome (infection)
can be affected by the vaccine status of other individuals.
The no interference assumption is reasonable
when such spillover are considered unlikely.
In cases where the outcome is a geographic region
(such as a county), no interference is reasonable 
if the main effect is expected to be local.}
Condition (C2) means that, conditional on the past, every subject has nonzero probability
of receiving intervention at any level.
Condition (C3) means that we have measured
all important confounding variables, which are variables that
affect the intervention and the outcome.

Alternatively, one can specify a simple model for
counterfactual $Y_t(\overline{a}_t)$ directly instead of applying the $g$-formula,
akin to specifying a regression model for the effect of $\overline{a}_t$ on $Y_t$.
\cref{eq::MSM0} provides an example.
This is called a {\em marginal structural model}
(MSM; \citealp{robins2000marginal}).
This approach is semi-parametric, in the sense that we do not need
a model for the joint distribution of the time series, 
$p(y_t | \overline{x}_{t}, \overline{a}_t, \overline{y}_{t-1})$,
since we do not need to apply the $g$-formula 
in \cref{eq::g3} or \cref{eq::g4}.
However, the trade-off for simplicity is that MSMs often fail 
to incorporate domain-specific knowledge.
In contrast, the approach we advocate for in~\cref{sec::estimation} (Method 3) 
includes knowledge about 
underlying epidemic dynamics
by interpreting augmented epidemic models as 
causal densities $p_{\overline{a}_t}(y_t;\theta)$.
That is, 
we treat augmented epidemic models as MSMs for the counterfactual $Y_t(\overline{a}_t)$.

Next we turn to the problem of parameter estimation.

\section{Method 1: Maximum Likelihood Estimation and the Problem of Phantom Bias}
\label{sec::phantoms}

Method 1 consists of using maximum likelihood or Bayesian methods
to estimate the model parameters
(estimates are 
based on the likelihood function so the model is implicitly thought of as a DGM for $Y$) 
then using the fitted model to compute the causal effect of $A$ 
on $Y$ or make scenario predictions for $Y(a)$
(so the model is now implicitly thought of as a causal model for $Y(a)$).
Then we run into a problem
first identified by \cite{robins1986new}:
maximum likelihood and Bayes estimates
of causal effects are inconsistent.
\cite{robins1986new} called this the $g$-null paradox
because the effect is especially pernicious in the null case, 
when there
is no causal effect but the estimated causal effect will be nonzero.

\begin{figure}[b!]
\begin{center}
\begin{tikzpicture}[->, shorten >=2pt,>=stealth, 
node distance=1cm, noname/.style={ ellipse, minimum width=5em, minimum height=3em, draw } ]
\node[] (1) {$A_0$};
\node (2) [right= of 1] {$X_1$};
\node (3) [right= of 2] {$A_1$};
\node (4) [right= of 3] {$Y_1$};
\node[circle,fill=red!20,draw] (5) [below= of 3] {$U$};
\path (1) edge [thick] node {} (2);
\path (1) edge [thick, bend left=50pt] node {} (3);
\path (2) edge [thick] node {} (3);
\path (5) edge [thick] node {} (2);
\path (5) edge [thick]  node {} (4);
\end{tikzpicture}
\end{center}
\caption{ \sl {
{\bf Effect of phantoms.}
The latent phantom variable $U$ is not a confounder because it has no arrows to $A_0$ or $A_1$.
Neither $A_0$ nor $A_1$ have a causal effect on $Y_1$.
The variable $X_1$ is a collider, meaning that two arrowheads point to $X_1$.
This implies that $Y_1$ and $(A_0, A_1)$  are dependent conditional on $X_1$,
which in turn implies that 
the parameters that relate $Y_1$ to $(A_0, A_1)$ in the epidemic model
will be non-zero even though there is no causal effect.}}
\label{fig::nulldag}
\end{figure}

\begin{example}[Synthetic example] 

{
To illustrate this, assume that we have data
that conforms with the directed graph in \cref{fig::nulldag}, 
from \citet{rw} Section 1.3. To estimate the causal effect 
of $A_0$ or $A_1$ on $Y_1$,
suppose that we assume the model
\begin{align*}
A_0   & \sim p(a_0),\\
X_1 & \sim \mathrm{Bernoulli}(\mathrm{expit}(\xi_0 + \xi_1 A_0)), \\
A_1   & \sim p(a_1|I_1,A_0),\\
Y_1 &= \beta_0 + \beta_1 A_0 + \beta_2 X_1 + \beta_3 A_1 + \delta,
\end{align*}
where $\delta$ are mean 0 Normal random variables
and $p$ is a logistic model,
to which we apply the $g$-formula to obtain 
the causal effect on $Y_1$ of setting
$(A_0,A_1)$ to $(a_0,a_1)$:
$$
\psi(a_0,a_1)=\E[Y_1(a_0,a_1)] = \beta_0 + \beta_1 a_0 + \beta_2 \mathrm{expit}(\xi_0 + \xi_1 a_0) + \beta_3 a_1.
$$
Note that $U$ did not appear in the model because it is not observed.}

{
The true value of the causal parameter
$\beta_A = (\beta_1,\beta_3)$ should be $(0,0)$ 
since there is no path from $A_0$ or $A_1$ to $Y_1$ in \cref{fig::nulldag}, 
and thus no causal effect of $(A_0, A_1)$ on $Y_1$.
However, \citet{robins1986new,rw} showed that 
the maximum likelihood estimators $\hat\beta_1$ and $\hat\beta_3$ are not zero
and in fact converge to nonzero numbers in the large sample 
limit. Consequently, the estimated causal effect depends on $(A_0, A_1)$, 
even though $(A_0, A_1)$ has no effect on $Y_1$.
This happens because there is a phantom $U$ 
that affects $X_1$ and $Y_1$, 
which makes $X_1$ a collider on the path $Y_1, U, X_1, A_0, A_1$ 
and in turn induces conditional dependence between 
$Y_1$ and $(A_0,A_1)$.
}
\end{example}

\begin{example}[Semi-mechanistic Hawkes model]
\label{ex::semi2}
Consider the semi-mechanistic model in \cref{eq::bhatt}, with reproduction number 
$R(\bar{A}_t, \beta)$ in \cref{eq::bhatt_R}. If confounders $X_t$ exist, we could 
include them in 
$R(\bar{A}_t, \beta)$ by adding the additive term $\beta_X X_t$
    \begin{equation} \label{eq::Rconfound}
    R(\bar{A}_t, \beta) = \frac{K}{1 + \exp(\beta_0 + \beta_X X_t +\beta_A A_t)},
    \end{equation}    
which is a standard strategy in the regression set-up. 
The ML estimates of the parameters are not available in closed form but can be obtained numerically \citep{bong2024}. 
The simulation study in \cref{sec::simulation_semi} \cref{fig::conf_int_ML_bhatt} shows
that the ML estimate of the causal effect 
$\beta_A$ in \cref{eq::Rconfound} is biased. In particular, when there is no causal effect ($\beta_A=0$),
the ML estimate is significantly different from zero.
\end{example}

{
In the two examples above, 
the phantoms create a situation where there is no causal effect, yet the 
estimated causal effect is nonzero. 
This happens because sequentially specified non-linear parametric 
models are not variation independent: 
causation and conditional dependence between outcome and treatment induced 
by phantoms are tied together
in the parameterization of the model. 
Essentially, there are not enough parameters to model the two effects separately.
Moreover, the ML and Bayes parameter estimates
are driven strongly by the conditional dependence
rather than by the causal effect.
Then both the causal and dependence 
estimates will be nonzero even though 
there is no causal effect.
}
{To summarize, letting $\gamma$ represent some measure of conditional 
dependence and $\beta_A$ denote the causal parameter, we have 
$$
\underbrace{\beta_A = 0 \ {\rm but} \ \gamma \neq 0 }_{\rm due \ to \ phantoms} \ \Longrightarrow\ 
\underbrace{\hat\gamma\neq 0}_{\rm due \ to \ estimation \ based \ on \ likelihood}  \ \Longrightarrow\ 
\underbrace{\hat\beta_A \neq 0}_{\rm due \ to \ variation\ dependence}.
$$
}
{That is, the $g$-null paradox is due to three things:
phantoms, which enable conditional dependence between outcome and treatment 
even though there is no causal effect,
variation dependence between causal and dependence effects,
and the fact that the 
causal estimate is nonzero due to dependence.
}

The fact that in the real world we can have
dependence but no causal effect and the model
cannot represent this, means that the model
is misspecified.
If $\Theta_0$ denotes the parameter values that correspond to no causal effect
and
$\Theta_+$ denotes the parameter values that correspond to conditional dependence,
we have that
$\Theta_0 \bigcap \Theta_+ = \emptyset$.
This was first pointed out by
\citet{robins1986new}
and has received much attention
since then; see, for example,
\citet{rw}, \citet{bates2022causal},
\citet{robins2000b}, \citet{babino2019multiple} and \citet{evans2024}.
It appears that the problem has gone unnoticed in the
literature on modeling epidemics.

The one case where phantoms do not induce a $g$-null paradox is when all of the
equations in the model are linear.
This is rarely the case in epidemic models.
Generally, any finite dimensional parametric model
which models each variable given the past and
has some non-linear component and will suffer the $g$-null paradox.
An infinite dimensional (nonparametric) model would solve the problem,
but this is not practical when we have many time points due to the curse  
of dimensionality. This is especially true with a single time series model.
There are more complicated parametric models that avoid
the problem as we describe in Method 4.

Unlike Method 4, Methods 2 and 3 do not require a more complicated parametric model.
They avoid the $g$-null paradox
by estimating the model parameters using 
estimating equations rather than the likelihood function.

{We conclude this section by emphasizing that phantom 
bias is different than unmeasured confounder bias.
Confounders are variables that affect $Y_t$ and $A_t$.
Unobserved confounders render the causal effect unidentifiable.
Phantoms do not affect $A_t$ so they do not change the $g$-formula, 
and they do not render the causal effect unidentifiable.
Rather, they cause the maximum
likelihood estimate or Bayes estimate for any sequentially specified non-linear
parametric model
to be inconsistent.}

\section{Method 2: Causal Effect Extraction and Estimating Equations} 
\label{sec::extraction}

If the epidemic model is treated as a DGM, then
we can avoid the $g$-null paradox by using the following workflow:

(1) If there are confounders $X_t$, include them in the DGM.

(2) Extract the causal effect $\psi(\overline{a}_t;\theta)$ from the DGM
using the $g$-formula in \cref{eq::g3}.
(Usually, the $g$-formula is intractable and needs to be computed
by simulation.) 

(3) Estimate the parameters of the MSM using estimating equations.
Specifically, 
\cite{robins2000marginal}
showed that
$\theta$ satisfies
\begin{equation} \label{eq::ee0}
\sum_t\E\left[ 
h_t(\overline{A}_t) (Y_t - \psi(\overline{A}_t;\theta)) W_t\right] = 0
\end{equation}
where
\begin{equation} \label{eq::pp0}
    W_t=
\prod_{s=1}^t
\frac{\pi(A_s| \overline{A}_{s-1})}
{\pi(A_s| \overline{A}_{s-1},\overline{X}_{s},\overline{Y}_{s-1})}.
\end{equation}
Here,
$\pi(A_s| \overline{A}_{s-1},\overline{X}_{s},\overline{Y}_{s-1})$ ---
called the {\em propensity score} ---
is the density of $A_s$ given the whole past and
$\pi(A_s| \overline{A}_{s-1})$
is the density of $A_s$ given the past $A$'s. Both these densities
are derived from the DGM.  
The term $h_t(\overline{A}_t)$
is an arbitrary vector of dimension dim($\theta$) 
of
functions of $\overline{A}_t$.
The simplest choice is $h_t(A_1,\ldots, A_t) = 1$ when dim$(\theta)=1$.
The choice of $h_t(A_1,\ldots, A_t)$ affects the variance of the estimator
but any choice leads to
consistent estimates of $\theta$.
In principle, there 
is an optimal choice that leads to the smallest possible variance but 
constructing it can be difficult \citep{robins2000b, kennedy2015}.
For simplicity in the three examples of \cref{sec::example}, we used 
$h_t(\overline{A}_t) = (1, A_t)$ to estimate the two-dimensional parameters
of the MSM.

We define $\hat\theta$ to be the solution to the sample version of \cref{eq::ee0}, namely,
\begin{equation}\label{eq::ee}
\sum_t 
h_t(\overline{A}_t) (Y_t - \psi(\overline{A}_t;\hat\theta) ) \hat W_t = 0
\end{equation}
where
$$
\hat W_t=
\prod_{s=1}^t
\frac{\hat \pi(A_s| \overline{A}_{s-1})}
{\hat \pi(A_s| \overline{A}_{s-1},\overline{X}_{s},\overline{Y}_{s-1})},
$$
with
``hats" signifying estimates.

\medskip

{\bf Confidence Intervals. \, } 
Under some regularity conditions,
we have 
$$
\sqrt{T}(\hat\theta - \theta)\rightsquigarrow N(0,\Sigma)
$$
with $\Sigma$ estimated by \citep{newey1987simple}
\begin{equation*}
    \frac{1}{T} \left[\tsum_t \nabla_\theta \hat\phi_t(\hat\theta) \right]^\top
    \left[\sum_{t, s} w(t, s) \cdot \hat\phi_t(\hat\theta) \hat\phi_t(\hat\theta)^\top \right]^{-1}
    \left[\tsum_t \nabla_\theta \hat\phi_t(\hat\theta) \right],
\end{equation*}
where $w$ is a smoothing kernel function and 
$$
\hat\phi_t(\hat\theta) \equiv 
h_t(\overline{A}_t) (Y_t - \psi(\overline{A}_t;\hat\theta)) W_t,
$$
from which we can construct confidence intervals for $\theta$ and 
the causal effect.
See also
\cite{dejong1997, dejong2000, andrews1991, andrews1988} for other estimates of $\Sigma$.
The conditions needed for the central limit theorem
involve mixing conditions which require that 
correlations in the data
eventually die off over time. 
{There is burgeoning
research on simulation based inference
\citep{tomaselli2025} for situations where asymptotic 
approximations for confidence intervals are not reliable, although this has yet to be
applied to causal inference.
}

\smallskip

As is evident in \cref{eq::g3}, calculating the causal effect $\psi(\overline{A}_t;\theta)$ requires 
the full joint distribution of $(\overline{X}_{t},\overline{A}_{t},\overline{Y}_{t})$.
While possible, postulating a reasonable joint distribution is a 
daunting task that entails making many assumptions. {Additionally, 
the
counterfactual model derived from the DGM is not easily interpretable.}
Method 3 is related 
to Method 2, but relies on an easily interpretable counterfactual model
and requires fewer
distributional assumptions, making Method 3 more practically attractive.

\section{Method 3 - Marginal Structural Models and Estimating Equations} \label{sec::estimation}

In this approach, {we do not treat the augmented epidemic model as a DGM and
we do not apply steps (1) and (2) of the workflow above. Instead,}
we interpret the augmented epidemic model
as an MSM, that is a model for the counterfactual $Y_t(\overline{a}_t)$.
This means that 
we fix $\overline{A}_t = \overline{a}_t$ 
and then the model gives the distribution of
$Y_t(\overline{a}_t)$.
In general, it is unlikely that we will be able to derive
a closed form for the distribution of
$Y_t(\overline{a}_t)$ because epidemic models are often hierarchical -- see, for example, 
\cref{eq::bhatt,eq::bhatt_R} and \cref{sec::epi} -- but we can use Monte 
Carlo simulation instead. 
Fix
$\overline{A}_t$ at $\overline{a}_t$, simulate $Y_1,\ldots, Y_t$
from the epidemic model parameterized by $\theta$, 
repeat this simulation $N$ times
giving values
$Y_{1}^{(k)},\ldots, Y_{t}^{(k)}$
for $k=1,\ldots, N$. Then the histogram of $Y_{t}^{(k)}$, 
$k=1,\ldots, N$, approximates the 
distribution of
$Y_t(\overline{a}_t)$, and the causal effect -- if taken 
to be the mean of that distribution -- is approximated with 
\begin{equation}\label{eq::monte_carlo}
\psi(\overline{a}_t;\theta) \approx
\frac{1}{N}\sum_{k=1}^N Y_{t}^{(k)}.
\end{equation}

{Having obtained the causal effect $\psi(\overline{a}_t;\theta)$, we
now apply the same step (3) of the workflow for Method 2. That is, 
we estimate $\theta$ by solving 
the estimating equations in 
\cref{eq::ee} and obtain a confidence interval for $\theta$ 
as specified there.
We already discussed the choice of $h_t$ in \cref{eq::ee}.
As for the propensity score $\pi(A_s| \overline{A}_{s-1},\overline{X}_{s},\overline{Y}_{s-1})$,
it is derived from the full joint 
distribution of $(\overline{X}_{t},\overline{A}_{t},\overline{Y}_{t})$ when we use Method 2,
but we must specify a model for it here since we do not assume a model for
the full joint 
distribution.}

\subsection{Propensity Score Model} \label{sec::propen}
{
When $A_t$ is continuous, a natural choice
is an autoregressive model of order $k$
$$
A_t =  \chi_0 + \chi_1 A_{t-1} + \cdots + \chi_{k} A_{t-k} +
\gamma_0^T X_{t} + \cdots + \gamma_k^T X_{t-k} + \lambda_1 Y_{t-1} + \cdots + \lambda_{k} Y_{t-k} + \epsilon_t
$$
where $\epsilon_1,\ldots, \epsilon_T \sim N(0,\sigma^2)$.
When $A_t$ is binary, 
a tractable model for the propensity score is the logistic autoregression
given by
$$
\pi(A_t| \overline{A}_{t-1},\overline{X}_t) = \pi_t^{A_t} (1-\pi_t)^{1-A_t}
$$
where
$$
\mathrm{logit}(\pi_t) = 
\chi_0 + \chi_1 A_{t-1} + \cdots + \chi_{k} A_{t-k} +
\gamma_0^T X_{t} + \cdots + \gamma_k^T X_{t-k} + \lambda_1 Y_{t-1} + \cdots + \lambda_{k} Y_{t-k}. 
$$
The parameters can be estimated by maximum likelihood, and
the maximum lag $k$ can be estimated using criteria like AIC or BIC.}

{A challenge is that the estimated propensity score
$\hat\pi(A_s| \overline{A}_{s-1},\overline{X}_{s},\overline{Y}_{s-1})$
can sometimes get very small,
which causes $\hat W_t$ and the resulting causal estimates
to have large variances.
Small propensity scores are often a problem in causal
inference and these problems are exacerbated in
time series models since we have to
multiple many weights together in \cref{eq::pp0}.
Dealing with small propensity scores
in causal inference is an active area of research.
One possibility is to use a method
called residual balancing \citep{Wodtke2020}
rather than estimating the propensity score.
This method uses moment conditions
to estimate the weights $W_t$ directly
and leads to more stable estimates.
The method was used successfully in \cite{bonvini2021causal}.
For a review on propensity scores see \cite{pan2018}.
}

\subsection{Challenges for Short Time Series}
{
Estimating the propensity score from a single time series
is challenging. 
All of the above assumes that we have enough observations $T$ to estimate
the parameters well.
The autoregressive model for the propensity score might have to allow
the maximum lag $k$ to be large if interventions in the far past effect current treatment.
Estimating the parameters in that case requires a long observation window.
Furthermore, the confidence intervals depend
on asymptotic Normality which also requires $T$ to be large.
If $T$ is not large then we may not be able to estimate the
parameters and asymptotic Normality may fail.
In these cases, causal inference might be infeasible.
If there are multiple time series then the situation might be better.
For example, suppose we have a time series for every county in the USA.
Then, estimation and inference become feasible again since we have much more information.
However, in this case, one might want to allow the parameters to vary across
counties which suggests using a hierarchical model.
Causal inference with hierarchical models
has been studied; see, for example,
\cite{feller2015} and \cite{weinstein2024}.
One possibility to estimate the parameters separately for each county,
shrink the estimates towards each other then
form debiased confidence intervals as in
\cite{armstrong2022robust,bong2024}.}

\subsection{Examples} \label{eq::examples}

We finish \cref{sec::estimation} with a rare example for which
the causal effect $\psi(\bar{a}_t, \theta)$, if taken to be the mean of the causal distribution, 
can be
computed in closed form instead of approximated by Monte Carlo, as in (\ref{eq::monte_carlo}).
For this example we also provide the derivatives of the estimating equations in \cref{eq::ee}, which are
needed to obtain $\hat \theta$ numerically.
We will use this example in \cref{sec::simulation_semi} to illustrate phantom bias.

We also consider the SEIR model, which does not admit a closed form solution for 
$\psi(\bar{a}_t, \theta)$ and must approximated by Monte Carlo in (\ref{eq::monte_carlo}).
We provide Monte Carlo derivatives of this approximation, so that we can use 
this model to further illustrate phantom bias 
in \cref{sec::simulation_seir}.

\begin{example}[Semi-mechanistic Model]
\label{ex::bhatt}
    Consider the model in \cref{eq::bhatt},
    with reproduction number 
    in \cref{eq::bhatt_R}.
    A different version takes
    \begin{equation} \label{eq::exp_semi}
\begin{aligned}
\Exp[I_t|\bar{A}_t, \bar{I}_{t-1}, \bar{Y}_{t-1}] &= \sum_{s<t} e^{\beta_0 + \beta_A A_s} g_{t-s} I_s, \\
\Exp[Y_t | \bar{A}_t, \bar{I}_t, \bar{Y}_{t-1}] &= \alpha_t \sum_{s<t} \pi_{t-s} I_s.
\end{aligned}       
    \end{equation} 
    We call \cref{eq::bhatt,eq::bhatt_R} the multiplicative version of the semi-mechanistic model and \cref{eq::exp_semi} the exponential version.
    Notice that for fixed values of $\bar{A}_t$, $\alpha_t$, $g$ and $\pi$,
    this model is a set of linear equations
    and is an example of what is known in the causality world as a linear {\em structural equation model} (SEM),
    for which tricks exist to derive the $g$-formula in closed-form.


    Meaningful dynamics in this model requires some positive infections $I_t$ prior to time $t = 1$.
    %
    \cite{bhatt} assumed
    $I_{t}=0$ for $t \leq -T_0$ and $I_{t}=e^\mu$ for $t = -T_0+1, \dots, 0$, where $\mu$ is a parameter to be estimated and $T_0 = 6$. Let $\bar{I}_0 \equiv (I_{-T_0+1}, \dots, I_0)$ indicate those \emph{seeding values} in the infection process. (We still exclude those seeding values in the definition of $\bar{I}_t = (I_1, \dots, I_t)$.)
    
    For the exponential model define
    $$
    \Lambda^e =
    \left(
    \begin{array}{cccc}
    0                             & 0                             & \cdots & 0\\
    g_1 e^{\beta_0 + \beta_A a_1} & 0                             & \cdots & 0\\
    g_2 e^{\beta_0 + \beta_A a_1} & g_1 e^{\beta_0 + \beta_A a_2} & \cdots & 0\\
    \vdots                        & \vdots                   & \vdots & \vdots\\
    g_{t-1} e^{\beta_0 + \beta_A a_1} & g_{t-2} e^{\beta_0 + \beta_A a_2} & \cdots  & 0
    \end{array}
    \right),
    \quad
    \Lambda^e_0 =
    \left(
    \begin{array}{ccc}
    g_{T_0} e^{\beta_0 + \beta_A a_{-T_0+1}}     & \cdots & g_1 e^{\beta_0 + \beta_A a_0}\\
    g_{T_0+1} e^{\beta_0 + \beta_A a_{-T_0+1}}   & \cdots & g_2 e^{\beta_0 + \beta_A a_0}\\
    g_{T_0+2} e^{\beta_0 + \beta_A a_{-T_0+1}}   & \cdots & g_3 e^{\beta_0 + \beta_A a_0}\\
    \vdots                                & \vdots & \vdots\\
    g_{T_0+t-1} e^{\beta_0 + \beta_A a_{-T_0+1}} & \cdots & g_t e^{\beta_0 + \beta_A a_0}
    \end{array}
    \right),
    $$
    and for the multiplicative model
    define 
    $$
    \Lambda^m =
    \left(
    \begin{array}{cccc}
    0                   & 0                       & \cdots       & 0\\
    g_1 R(\overline a_2, \beta) & 0                       & \cdots       & 0\\
    g_2 R(\overline a_3,\beta) & g_1 R(\overline a_3, \beta)   & \cdots       & 0\\
    \vdots              & \vdots                  & \vdots  & \vdots\\
    g_{t-1} R(\overline a_t, \beta) & g_{t-2} R(\overline a_t, \beta)& \cdots  & 0
    \end{array}
    \right),
    \quad
    \Lambda^m_0 =
    \left(
    \begin{array}{ccc}
    g_{T_0} R(\bar{a}_1, \beta)   & \cdots & g_1 R(\bar{a}_1,\beta)\\
    g_{T_0+1} R(\bar{a}_2, \beta)   & \cdots & g_2 R(\bar{a}_2,\beta)\\
    g_{T_0+2} R(\bar{a}_3, \beta)   & \cdots & g_3 R(\bar{a}_3,\beta)\\
    \vdots                        & \vdots & \vdots\\
    g_{T_0+t-1} R(\bar{a}_t, \beta) & \cdots & g_t R(\bar{a}_t, \beta)
    \end{array}
    \right).    
    $$
    Finally, define
    $$
    \Pi =
    \left(
    \begin{array}{cccc}
    0                   & 0                       & \cdots       & 0\\
    \pi_1 \alpha_2 & 0                       & \cdots       & 0\\
    \pi_2 \alpha_3 & \pi_1 \alpha_3   & \cdots       & 0\\
    \vdots              & \vdots                  & \vdots  & \vdots\\
    \pi_{t-1} \alpha_t & \pi_{t-2} \alpha_t & \cdots  & 0
    \end{array}
    \right),
    \quad
    \Pi_0 =
    \left(
    \begin{array}{ccc}
    \pi_{T_0} \alpha_1   & \cdots & \pi_1 \alpha_1\\
    \pi_{T_0+1} \alpha_2   & \cdots & \pi_2 \alpha_2\\
    \pi_{T_0+2} \alpha_3   & \cdots & \pi_3 \alpha_3\\
    \vdots                        & \vdots & \vdots\\
    \pi_{T_0+t-1} \alpha_t & \cdots & \pi_t \alpha_t
    \end{array}
    \right).    
    $$
    Then the marginal structural model $\psi(\bar{a}_t; \theta)$ is given in a closed form as follows.

    \begin{theorem}
        For the exponential model,
        \begin{equation}\label{eq::gform1}
            \E[I_t (\overline{a}_t)] = 
            [(id - \Lambda^e)^{-1} \Lambda^e_0 \bar{I}_0]_t
        \end{equation}
        and
        \begin{equation}\label{eq::gform2}
            \psi(\overline{a}_t;\theta) \equiv \E[Y_t(\overline{a}_t)] =
            [\{\Pi(id - \Lambda^e)^{-1} \Lambda^e_0 + \Pi_0\} \bar{I}_0]_t,
        \end{equation}
        where the subscript $t$ represents the $t$-th element of the outcome vector.
        For the multiplicative model,
        the expressions are the same except that 
        $\Lambda^m$ and $\Lambda^m_0$ replace $\Lambda^e$ and $\Lambda^e_0$. 
    \end{theorem}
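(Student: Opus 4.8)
The plan is to exploit the fact that, once the intervention path $\overline{a}_t$ is fixed, the augmented model of \cref{eq::exp_semi} (and of \cref{eq::bhatt}) specifies the conditional mean of each infection count as a \emph{linear} function of the past and seed infections, with coefficients that are deterministic given $\overline{a}_t$. Because we are treating the model as an MSM (Method 3), fixing $\overline{A}_t=\overline{a}_t$ yields the counterfactual distribution directly, so no $g$-formula integration is needed; all that remains is to propagate means through a linear recursion and solve it in closed form by inverting a triangular matrix.

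First I would set $m_i \equiv \E[I_i(\overline{a}_t)]$ and apply the tower property to the model equation. Writing the convolution $\sum_{s<i}$ as a sum over the seeding indices $s\in\{-T_0+1,\dots,0\}$ plus a sum over the realized infections $s\in\{1,\dots,i-1\}$, and noting that the coefficients $g_{i-s}e^{\beta_0+\beta_A a_s}$ are non-random once $\overline{a}_t$ is fixed, linearity of expectation gives
\begin{equation*}
m_i = \sum_{s=-T_0+1}^{0} g_{i-s}\,e^{\beta_0+\beta_A a_s}\,I_s \;+\; \sum_{s=1}^{i-1} g_{i-s}\,e^{\beta_0+\beta_A a_s}\,m_s,
\end{equation*}
where the seeds $I_s$ are the deterministic constants $e^{\mu}$ or $0$. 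Reading off the matrix definitions, the first sum is $[\Lambda^e_0\overline{I}_0]_i$ and the second is $[\Lambda^e \overline{m}]_i$, so in vector form $\overline{m}=\Lambda^e\overline{m}+\Lambda^e_0\overline{I}_0$. Since $\Lambda^e$ is strictly lower triangular it is nilpotent, hence $id-\Lambda^e$ is invertible (with $(id-\Lambda^e)^{-1}=\sum_{k\ge 0}(\Lambda^e)^k$), and solving gives $\overline{m}=(id-\Lambda^e)^{-1}\Lambda^e_0\overline{I}_0$, which is \cref{eq::gform1}.

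Next I would repeat the argument for the observation layer. Setting $n_i\equiv\E[Y_i(\overline{a}_t)]$ and taking expectations of the second line of \cref{eq::bhatt}, the same seed/infection split yields $n_i=\alpha_i\sum_{s\le 0}\pi_{i-s}I_s+\alpha_i\sum_{s=1}^{i-1}\pi_{i-s}m_s$, i.e.\ $\overline{n}=\Pi\,\overline{m}+\Pi_0\overline{I}_0$. Substituting the expression just obtained for $\overline{m}$ gives $\overline{n}=\{\Pi(id-\Lambda^e)^{-1}\Lambda^e_0+\Pi_0\}\overline{I}_0$, which is \cref{eq::gform2}. For the multiplicative model the only change is that the coefficient multiplying $g_{i-s}I_s$ is the target-time reproduction number $R(\overline{a}_i,\beta)$ rather than the source-time factor $e^{\beta_0+\beta_A a_s}$; re-reading the entries of $\Lambda^m$ and $\Lambda^m_0$ shows the identical recursion holds with these matrices in place of $\Lambda^e,\Lambda^e_0$, and the derivation carries over verbatim.

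I expect the only real subtlety to be the step that lifts the conditional-mean recursion to an unconditional one: this is clean precisely because the reproduction factors depend on $\overline{a}_t$ and calendar time alone, not on the current state, so the coefficients come out of the expectation and the system is genuinely linear. This is also what would fail for a fully mechanistic model such as SIR, where susceptible depletion makes the per-capita rate state-dependent. The remaining work is purely bookkeeping: matching the index ranges in $g_{i-s}$ and the time arguments of $\alpha_i$, $R(\overline{a}_i,\beta)$ and $e^{\beta_0+\beta_A a_s}$ against the displayed matrices, and checking the seed-block indexing $s\in\{-T_0+1,\dots,0\}$ against the columns of $\Lambda^e_0,\Lambda^m_0,\Pi_0$.
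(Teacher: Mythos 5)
Your proposal is correct and follows essentially the same route as the paper: both reduce the intervened model to the linear system $\overline{I}_t(\overline{a}_t)=\Lambda^e\overline{I}_t(\overline{a}_t)+\Lambda^e_0\bar{I}_0+\epsilon$, invert the (nilpotent-shifted) triangular matrix, and push the result through $\Pi,\Pi_0$ for the observation layer. The only cosmetic difference is that you take expectations first via the tower property and solve for the mean vector, whereas the paper solves the structural equation for the random vector and then takes expectations; the computation is identical.
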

    
    \begin{proof}
        Consider the intervened graph in \cref{fig::intervene} with
        $\overline{A}_t$ set to 
        $\overline{a}_t$.
        For this graph, 
        we have
        $$
        \overline{I}_t(\overline{a}_t) = \Lambda^e \overline{I}_t(\overline{a}_t) + \Lambda^e_0 \bar{I}_0 + \epsilon,
        $$
        for the exponential model (\cref{eq::exp_semi}),
        which,
        as mentioned above,
        is a linear structural equation model.
        Now
        $$
        \overline{I}_t(\overline{a}_t) = (id -\Lambda^e)^{-1} \Lambda^e_0 \bar{I}_0 + (id -\Lambda^e)^{-1} \epsilon
        $$
        and hence, the last element of this vector is
        $$
        \E[I_t (\overline{a}_t)] = [(id -\Lambda^e)^{-1} \Lambda^e_0 \bar{I}_0]_t.
        $$
        Subsequently,
        $$
        \Exp[\overline{Y}_t(\overline{a}_t)] = \Pi \Exp[\overline{I}_t(\overline{a}_t)] + \Pi_0 \bar{I}_0
        = [\{\Pi(id -\Lambda^e)^{-1} \Lambda^e_0 + \Pi_0\} \bar{I}_0]_t.
        $$

        The proof proceeds similarly for the multiplicative model (\cref{eq::bhatt}), 
        but with $\Lambda^m$ and $\Lambda^m_0$ in place of $\Lambda^e$ and $\Lambda^e_0$.
    \end{proof}   
\end{example}

Most often, we cannot solve the estimating equations (\cref{eq::ee}) in closed form. 
In that case, we apply Newton's method, 
which requires the computation of the first derivative on the left-hand side of the equation with respect to the parameter of interest. 
Specifically, since $\beta$ is the key parameter in both the semi-mechanistic model (\cref{eq::bhatt}) and the SEIR model (\cref{eq::seir}), 
we provide detailed calculations for the derivative with respect to $\beta$.

\begin{example}[Semi-mechanistic Model]
\label{ex::bhatt_derivative}
    In \cref{ex::bhatt}, we derived the closed-form expression for the 
    causal mean $\psi(\bar{a}_t; \theta)$ in a multiplicative semi-mechanistic model as:
    \begin{equation*}
        \psi(\bar{a}_t; \theta) =
        [\{\Pi(id - \Lambda^m)^{-1} \Lambda^m_0 + \Pi_0\} \bar{I}_0]_t.
    \end{equation*}
    For each component of $\beta$ ($\beta_0$ and $\beta_A$), the first derivative of $\Lambda^m$ with respect to $\beta_i$ is given by:
    \begin{equation*}
        \frac{\partial}{\partial \beta_i} \Lambda^m(t,s) = g_{t-s} \frac{\partial}{\partial \beta_i} R(\bar{a}_t, \beta) \mathbf{1}\{t > s\},
    \end{equation*}
    where $\Lambda^m(t,s)$ is parameterized by the rate function $R(\bar{a}_t, \beta)$, and $\mathbf{1}\{t > s\}$ is an indicator function. Similarly, the derivative of $\Lambda^m_0$ with respect to $\beta_i$ follows the same structure.
    
    Using the identity from matrix calculus, $\frac{\partial U^{-1}}{\partial x} = - U^{-1} \frac{\partial U}{\partial x} U^{-1}$, we can express the derivative of $\psi(\bar{a}_t; \theta)$ with respect to $\beta_i$ as:
    \begin{equation*}
    \begin{aligned}
        \frac{\partial}{\partial \beta_i} \psi(\bar{a}_t; \theta) 
        & = \left[ \left\{ 
            \Pi(id - \Lambda^m)^{-1} \frac{\partial \Lambda^m}{\partial \beta_i} (id - \Lambda^m)^{-1} \Lambda^m_0 + \Pi(id - \Lambda^m)^{-1} \frac{\partial \Lambda^m_0}{\partial \beta_i}
        \right\} \bar{I}_0 \right]_t \\
        & = \left[ \Pi(id - \Lambda^m)^{-1} \left\{ 
             \frac{\partial \Lambda^m}{\partial \beta_i} \Exp[\bar{I}_t(\bar{a}_t)] + \frac{\partial \Lambda^m_0}{\partial \beta_i} \bar{I}_0 
        \right\} \right]_t.
    \end{aligned}
    \end{equation*}
    The derivative for the exponential model is given similarly.
\end{example}

\begin{example}[SEIR Model]
\label{ex::seir_derivative}
The SEIR model in \cref{eq::seir} does not admit a closed-form expression for the 
causal effect $\psi(\bar{a}_t; \theta)$. 
In \cref{eq::monte_carlo}, we proposed estimating this quantity through 
Monte Carlo approximation. 
Here, we describe how the derivative of this approximation can also be computed using Monte Carlo samples.
First, applying the law of total probability,
\begin{equation*}
\begin{aligned}
\Exp_\theta[Y_t(\bar{a}_t)] &=
\int \cdots \int \Exp[Y_t | \bar{B}_{t-1}, \bar{C}_{t-1}, \bar{Y}_{t-1}] 
\prod_{s=1}^{t-1} dP_{Y_s}(Y_s | \bar{B}_{s-1}, \bar{C}_{s-1}, \bar{Y}_{s-1}) \\
& \quad \times dP_{C_s}(C_s | \bar{B}_{s-1}, \bar{C}_{s-1}, \bar{Y}_{s-1}) 
\times dP_{B_s|\beta}(B_s | \bar{B}_{s-1}, \bar{C}_{s-1}, \bar{Y}_{s-1}),
\end{aligned}
\end{equation*}
where $P_{Y_s}$, $P_{C_s}$, and $P_{B_s}$ are binomial distributions. 
(Note that $\overline{a}_s$ is implicit in some expressions on the right hand side.)
The ``number of trials" parameters for these variables depend on the 
conditioning terms $\bar{B}_{s-1}$, $\bar{C}_{s-1}$, and $\bar{Y}_{s-1}$, 
with success probabilities denoted by $p_Y$, $p_C$, and $p_B$, respectively. 
Importantly, only $p_{B_s|\beta}$, and consequently $P_{B_s|\beta}$, are parametrized by 
$\beta$ through $\eta(\bar{a}_s; \beta)$.
Now, suppose we have Monte Carlo samples 
$\{(\bar{B}_T^{(k)}, \bar{C}_T^{(k)}, \bar{Y}_T^{(k)}): k = 1, \dots, N\}$ 
drawn under a given $\beta$. For any alternative parameter $\beta'$, we approximate 
the mean using importance sampling as follows:
\begin{equation*}
\begin{aligned}
\psi(\bar{a}_t; \theta') 
&\approx \frac{1}{N} \sum_{k=1}^N \Exp[Y_t^{(k)} | \bar{B}_{t-1}^{(k)}, \bar{C}_{t-1}^{(k)}, \bar{Y}_{t-1}^{(k)}] 
\prod_{s=1}^{t-1} \frac{dP_{B_s|\beta'}}
{dP_{B_s|\beta}}(B_s | \bar{B}_{s-1}^{(k)}, \bar{C}_{s-1}^{(k)}, \bar{Y}_{s-1}^{(k)}),
\end{aligned}
\end{equation*}
where $\theta'$ is $\theta$ with component $\beta$ replaced by $\beta'$, 
and $\frac{dP_{B_s|\beta'}}{dP_{B_s|\beta}}$ is the Radon--Nikodym derivative. 
Note that when $\beta' = \beta$, the importance sampling estimator reduces to the Monte Carlo 
approximation for $\psi(\bar{a}_t; \theta)$ as given in \cref{eq::monte_carlo}.
    
Next, to compute the derivative of $\psi(\bar{a}_t; \theta)$ with respect to $\beta$, 
we recognize that the derivative of the Radon--Nikodym derivative $
\frac{dP_{B_s|\beta'}}{dP_{B_s|\beta}}$ at $\beta' = \beta$ is 
the gradient of the log-likelihood:
$\nabla_\beta \log\{f_{B_s|\beta}(B_s | \bar{B}_{s-1}^{(k)}, \bar{C}_{s-1}^{(k)}, \bar{Y}_{s-1}^{(k)})\}$. 
By applying the chain rule, we obtain:
\begin{equation*}
\begin{aligned}
\nabla_\beta \psi(\bar{a}_t; \theta) 
&\approx \frac{1}{N} \sum_{k=1}^N 
\Exp[Y_t^{(k)} | \bar{B}_{t-1}^{(k)}, \bar{C}_{t-1}^{(k)}, \bar{Y}_{t-1}^{(k)}] 
\sum_{s=1}^{t-1} \nabla_\beta 
\log\{f_{B_s|\beta}(B_s | \bar{B}_{s-1}^{(k)}, \bar{C}_{s-1}^{(k)}, \bar{Y}_{s-1}^{(k)})\}.
\end{aligned}
\end{equation*}
\end{example}

\section{Method 4 - Causal Preserving Data Generating Models}
\label{section::full}

Method 3 requires two models: an MSM
for $Y_t(\overline{a}_t)$
and a model for the propensity score. It does not require a 
model for the joint distribution
$p(\overline{x}_t,\overline{a}_t,\overline{y}_t)$ of the observables.
Method 4 consists of constructing a model
for
$p(\overline{x}_t,\overline{a}_t,\overline{y}_t)$
that preserves the specified epidemic MSM
for $Y_t(\overline{a}_t)$.
Then we can use
maximum likelihood to estimate all the parameters
of the model, including the parameters of the causal model.
We use the approach developed in
\cite{evans2024}
based on
copulas, which they call a 
{\em frugal parameterization}.
This requires more modeling than the
estimating equation approach
but it also avoids the null paradox
and has the advantage that one avoids dividing by the propensity score
in \cref{eq::pp0},
which can lead to unstable inference.
We will assume throughout this section
that all the variables are
continuous.

First, we recall some basics about copulas
\citep{joe2014}.
A copula 
$C(u_1,\ldots, u_d)$
is a joint distribution
on $[0,1]^d$ with uniform marginals.
We let
$c(u_1,\ldots, u_d)$
denote the corresponding density function.
A key fact is that
any joint density
$p(x_1,\ldots, x_d)$
for random variables
$X_1,\ldots, X_d$ can be written as
\begin{equation}\label{eq::cop1}
p(x_1,\ldots, x_d) = c(F_1(x_1),\ldots, F_d(x_d))
\prod_j p_j(x_j) 
\end{equation}
for some copula $c$,
where $p_j$ is the marginal density of $X_j$ and
$F_j$ is the corresponding cdf.
Thus, copulas provide a way to paste together
a set of marginal distributions to form a joint distribution.

One can consider parametric families of copulas
$c(u;\theta)$.
For example, the Gaussian copula has density
$$
c(u) = |\theta|^{-1/2} 
\exp\left(-\frac{1}{2} \Phi^{-1}(u)^T (\theta-I) \Phi^{-1}(u)\right)
$$
where $\theta$ denotes a correlation matrix,
$\Phi(u) = (\Phi(u_1),\ldots,\Phi(u_d))$
and $\Phi$ is the standard Normal cdf.
The process of constructing parametric families of copulas
has a rich literature.



To see how this helps build causal models, first consider observations at a single time point: a vector of confounders $X \in \mathbb{R}^d$, an intervention $A \in \mathbb{R}$, and an outcome $Y \in \mathbb{R}$. Let $p_a(y)$ be the density of a given marginal structural model for the counterfactual $Y(a)$, and let $F_a(y) = \int^y_{-\infty} p_a(s) ds$ denote the cdf.
We aim to construct a joint density $p(x,a,y)$ for the observed variables $(X,A,Y)$ that is consistent with the given counterfactual distribution $p_a(y)$. Consistency here means that applying the $g$-formula, i.e., $\int p(y|x,a)p(x) dx$, to the joint density $p(x,a,y)$ recovers the original counterfactual density $p_a(y)$.
Under Conditions (C1), (C2) and (C3), we can construct such joint distributions by
\begin{equation*}
    p(x,a,y) = p_a(x,a,y) = p_a(y) p(a)  p_j(x_1) \cdots p_d(x_d)
    c(F_a(y), G_1(x_1), \ldots, G_d(x_d), Q(a)).
\end{equation*}
using \cref{eq::cop1}, where
$p_a(x,a,y)$ denotes the joint density of
$(X,A,Y(a))$, and $G_j$ and $Q$ are the cdfs of $X_j$ and $A$, respectively.
We can add parameters to these distributions to define a parametric family
\begin{equation*}
\begin{aligned}
    p(x,a,y;\beta,\gamma,\theta) 
    & = p_a(y;\beta) p(a;\delta)\prod_j p_j(x_j;\gamma_j) \\
    & \quad \times c(F_a(y;\beta),G_1(x_1;\gamma_1),\ldots, G_d(x_d;\gamma_d),Q(a;\delta); \theta).
\end{aligned}
\end{equation*}
Because $\beta$ and $\theta$ parametrize the causation by $A$ and other indirect correlation between $A$ and $Y$ separately, these models are variation independent and avoid the $g$-null paradox \citep{evans2024}.

Turning to the time varying case,
a similar construction can be used
but is much more involved.
The recent paper by
\cite{lin2025} shows how to use
a class of copulas known as
pair copulas to parameterize the joint
distribution.
The details are fairly involved
and we refer the reader to
\cite{lin2025} for details.

The estimating equation approach (Method 3) requires two models:
the epidemic MSM
and a model for the propensity score.
The fully specified, frugal approach (Method 4) requires, in addition,
a model for $X$ and a copula.
A full exploration of how to
construct these models
will be quite complicated and we leave this
for future work.
The advantage of Method 3 is thus that it requires
less modeling.
The advantage of Method 4
is that we never need to divide by the propensity score
which can lead to instability.
Also, a fully specified
joint distribution (Method 4)
may be useful for purposes of interpretability and
model checking.

\section{Examples} \label{sec::example}

Our first two examples use simulated data to illustrate 
that phantom variables can induce bias in ML parameter estimates, 
and that using estimating equations yields unbiased estimates. 
The last example is an analysis of the effect of 
a mobility measure -- the proportion of full-time work --
on COVID-19 deaths in 30 
US states at the start of the pandemic.
The code vignettes used to generate the results 
are provided in \url{github.com/HeejongBong/causepid}.

\subsection{Semi-mechanistic model simulated data} \label{sec::simulation_semi}

A time series consistent with the DAG in \cref{fig::dagsemi0} is 
simulated from the semi-mechanistic model in \cref{eq::bhatt} as follows. 
\begin{itemize}
\item
Seed the infection time series by setting \(I_t = 0\) for \(t \leq -40\) and \(I_t = e^{\mu}\),
\(\mu = \log(100)\), for \(t = -39, \dots, 0\). 
The $t=-40$ time cutoff point corresponds to the 
main support of the generating distribution $g$ used 
in \cite{bhatt,bong2024} and shown in \cref{fig::g}.
(More realistic infection data could be simulated from the 
data generating process described below starting at $t=-39$, 
but then it is difficult to obtain ML estimates. Details 
are in \cite{bong2024}.)
\item
Seed the confounder, intervention and outcome at $t = 0$ with $X_0 = A_0= 0$ and $Y_0 = I_{-1}$.
  \item  
Simulate phantom variables \(U_t\) from a Gaussian random 
process with mean zero and covariance kernel \(\Sigma(t, s) = \phi^{\abs{t-s}}\), with \(\phi = 0.95\).
\end{itemize}

Then for \(t = 1, \dots, 120\),
\begin{enumerate}
    \item 
    sample confounders \(X_t\) from a Gaussian distribution with mean \( \xi_U U_t + \xi_X X_{t-1} + \xi_A A_{t-1} + \xi_Y Y_{t-1}\), with \(( \xi_U, \xi_X, \xi_A, \xi_Y) = ( 0.2, 0, 1, 0)\) and variance \(\sigma^2 = 0.09\);
    
    \item 
    generate binary interventions \(A_t\) from a Bernoulli distribution with probability 
    \[
        \P(A_t = 1 \mid \bar{X}_t, \bar{A}_{t-1}, \bar{Y}_{t-1}) = \frac{e^{\gamma_1 + \gamma_X X_t + \gamma_A A_{t-1} + \gamma_Y Y_{t-1}}}{1 + e^{\gamma_1 + \gamma_X X_t + \gamma_A A_{t-1} + \gamma_Y Y_{t-1}}}, 
    \]
    where \((\gamma_1, \gamma_X, \gamma_A, \gamma_Y) = (-2.5, 0, 4, 0.001)\); 
    
    \item 
    simulate an infection process \(I_t\) from a negative binomial distribution with ``number of successes'' parameter \(\nu = 10\) and 
    mean parameter specified in \cref{eq::bhatt}, where $g$ is the generating distribution in \cref{fig::g}, 
    the reproduction number is 
    \begin{equation} \label{eq::repro}
    R(\bar{A}_t, \beta) = \frac{K}{1 + \exp(\beta_0 + \beta_U U_t + \beta_X X_t + \beta_A A_t)},
    \end{equation}
    \(K = 6.5\) is the maximum transmission rate, {and $(\beta_U, \beta_X) = (0.3,0)$.
    We will vary the values of $\beta_0$ and $\beta_A$, as described below.}

    {Notice that we set $\gamma_X=\beta_X=0$, meaning that there is no unobserved confounding, to highlight that phantom bias is 
    different than unobserved confounder bias.}
    
    \item 
    Finally, simulate an observed time series $Y_t$ -- e.g. cases or deaths -- according to \cref{eq::bhatt} with \(\alpha_t = 1\) and \(\pi_t = \mathbf{1}\{t=1\}\), for simplicity, so that \(Y_t = \mathbb{E}[Y_t \mid \bar{I}_{t-1}, \bar{Y}_{t-1}, \bar{A}_t] \equiv I_{t-1}\) for all \(t\). 
\end{enumerate}

\begin{figure}[t!]
    \centering
    \includegraphics[height=0.315\textwidth]{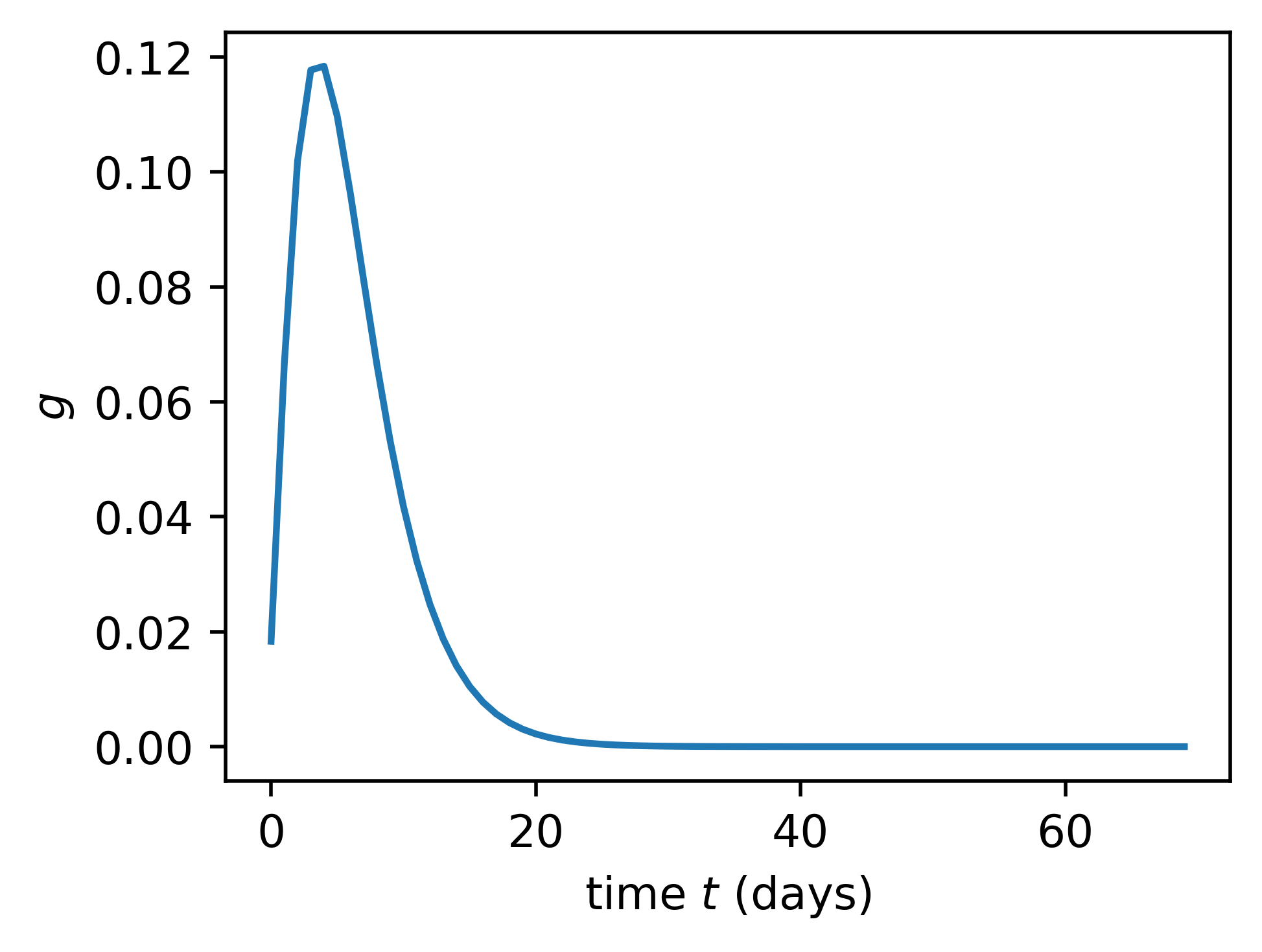}
    \vspace{-0.1in}
    \caption{\sl Generating distribution $g$ from \citet{bhatt}.}
    \label{fig::g}
\end{figure}



The simulation was performed for $21$ linearly spaced values
of $\beta_A$ in \([-1, 0]\), and 
for each value of $\beta_A$ we let $\beta_0$ depend on $\beta_A$ according to
$\beta_0 = - \log(5.5) + 0.5 - \beta_A/2$, to prevent
$R(\bar{A}_t, \beta)$ from getting too small or too large, so that the 
simulated epidemic curves do not explode or plunge to zero. Our parameter 
choices mostly produced epidemic curves with shapes we typically observe in practice, that is
rise, plateau and then slowly decrease.
For each $\beta_A$, we simulated 200 time series $Y_t$, \(t = 1, \dots, 120\), 
and for each time series, {we obtained the ML estimates of $(\beta_0, \beta_A, \beta_X)$ and
the seeding parameter $\mu$ (this is Method 1)
using 
the package \texttt{freqepid} \citep{bong2024}, assuming the same DGM we used to simulate the data
but for one detail: we assumed reproduction number
$R(\bar{A}_t, \beta) = \frac{K}{1 + \exp(\beta_0 + \beta_X X_t + \beta_A A_t)}$ (\cref{eq::Rconfound})
instead of \cref{eq::repro}, which was used to simulate the data, since the 
$U_t$ are not observed. All other parameters were assumed to be known.}

\cref{fig::conf_int_ML_bhatt} shows the averages of the 200 MLEs of $\beta_A$ plotted against the true $\beta_A$, along
with 95\% confidence intervals.
There is phantom bias.
There was no bias when we reproduced the simulation with \(\beta_U = 0 \) in \cref{eq::repro}, confirming that
what
we see in \cref{fig::conf_int_ML_bhatt} 
is due to phantom variables; see \cref{fig::conf_int_non_phantom_betas}.

\begin{figure}[t!]
    \centering
    \setlength{\labelsep}{-2.5pt}
    \begin{minipage}{0.45\textwidth}
        \sidesubfloat[]{\label{fig::conf_int_ML_bhatt} 
        \includegraphics[width=0.95\textwidth]{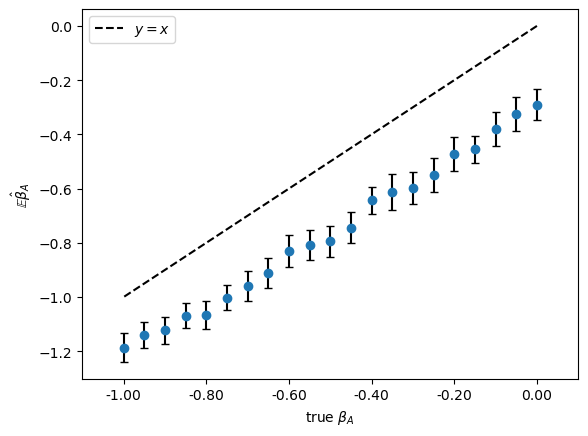}}
    \end{minipage}
    \hspace{0.2in}
    \begin{minipage}{0.45\textwidth}
        \sidesubfloat[]{\label{fig::conf_int_EEnc_bhatt} 
        \includegraphics[width=0.95\textwidth]{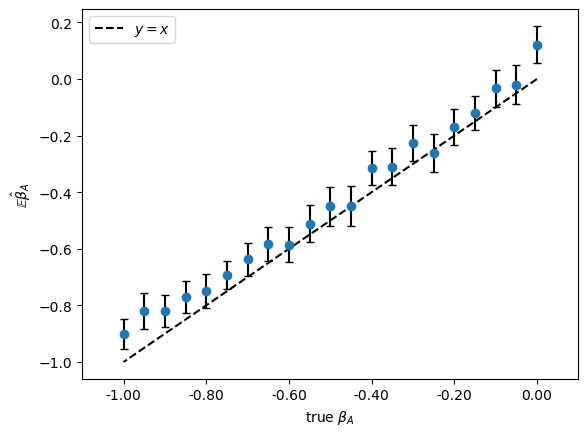}}
    \end{minipage}
    \vspace{-0.1in}
    \caption{\sl {\bf Causal parameter estimation for semi-mechanistic epidemic model data.} (a) Method 1: ML estimate and (b) Method 3: 
    estimating equations estimate of \(\beta_A\) averaged across 200 repeat simulations (blue dots) with 95\%-confidence intervals (error bars), for a range of true $\beta_A$ values. There is phantom bias in (a) but not in (b). 
    Small biases persist in (b), as evidenced by the confidence intervals excluding the true value 
of $\beta_A$ more often than they should. This is due to model misspecification, which we 
discuss in \cref{sec::misspec}.
    \label{fig::conf_int_bhatt}
    }
\end{figure}

For each simulated time series we estimated $(\beta_0, \beta_A)$ per Method 3,
by solving the estimating equations in \cref{eq::ee},
assuming the MSM in \cref{eq::bhatt} with reproduction number in \cref{eq::bhatt_R},
that is $R(\bar{A}_t, \beta) = \frac{K}{1 + \exp(\beta_0 + \beta_A A_t)}$. 
(Note that \cref{eq::bhatt_R} depends only on $A_t$.)
We proceeded numerically via Newton's method initialized at the MLEs, 
and causal effect
$\psi(\overline{a}_t, \theta)$ and derivative $\nabla_\beta \psi(\overline{a}_t, \theta)$ obtained 
following \cref{ex::bhatt,ex::bhatt_derivative}.
{In~\cref{eq::ee}, we took $h_t(\bar A_t) = (1, A_t)$ and we
modeled the denominator of the propensity score $W_t$ (\cref{eq::pp0})
using a logistic regression with linear AR(1) logit link, 
$$
\mathrm{logit}(\pi_t)  =  \chi_0 + \chi_1 A_{t-1} + \gamma_0 X_t + \lambda_1 Y_{t-1},
$$
as described in \cref{sec::propen},
and estimated the parameters by maximum likelihood.
We proceeded similarly with a logistic AR(1) model for the numerator.}

\cref{fig::conf_int_EEnc_bhatt} shows the averages over 
the 200 simulations of the estimating equation estimates of 
$\beta_A$ for each true value of $\beta_A$, along 95\% confidence intervals. 
We no longer have phantom bias.
{Notice that small biases still persist, as evidenced by the confidence intervals excluding the true value 
of $\beta_A$ more often than they should. This is due to model misspecification, which we 
discuss in \cref{sec::misspec}.
}

\subsection{SEIR model simulated data} \label{sec::simulation_seir}

We further illustrate phantom bias using the SEIR model in \cref{ex::seir}.
One time series is generated as follows.

\begin{itemize}
    \item Set $I_0 = 100$, $E_0 = 0$ and $S_0 = N - I_0 - E_0$, with total population $N = 100,000$.

    \item Simulate phantom variables \(U_t\) from a Gaussian random process with mean 
    zero and covariance kernel \(\Sigma(t, s) = \phi^{\abs{t-s}}\), with \(\phi = 0.95\). 
\end{itemize}

Then for \(t = 1, \dots, 120\), 
\begin{enumerate}
    \item
    sample confounders \(X_t\) from a Gaussian distribution with mean \(\xi_U U_t + \xi_X X_{t-1} + \xi_A A_{t-1} + \xi_Y Y_{t-1}\), 
    where \((\xi_U, \xi_X, \xi_A, \xi_Y) = (0.5, 0, 1, 0)\) and variance \(\sigma^2 = 0.09\).
    \item Generate binary interventions \(A_t\) from a Bernoulli distribution with
    \[
        \P(A_t = 1 \mid \bar{X}_t, \bar{A}_{t-1}, \bar{Y}_{t-1}) = \frac{e^{\gamma_1 + \gamma_X X_t + \gamma_A A_{t-1} + \gamma_Y Y_{t-1}}}{1 + e^{\gamma_1 + \gamma_X X_t + \gamma_A A_{t-1} + \gamma_Y Y_{t-1}}},
    \]
    where 
    \((\gamma_1, \gamma_X, \gamma_A, \gamma_Y) = (-2.5, 0, 4, 100/N)\);
    \item 
    simulate an exposure process \(B_t\) from a binomial distribution with number of trials $S_{t-1}$ and success probability 
    \begin{equation}\label{eq::pB}
        p_{B}(\bar A_t, \beta) = 1 - \exp(- \eta(\bar A_t, \beta)  I_{t-1} / N),
    \end{equation}
    with \(\eta(\bar A_t, \beta) = \exp(- \beta_0 - \beta_U U_t - \beta_X X_t - \beta_A A_T - \beta_Y Y_{t-1})\),
    and set $S_t = S_{t-1} - B_t$.
    
    We considered $21$ linearly spaced values
    of $\beta_A$ in \([-1, 0]\), and 
    for each value, we set 
    \((\beta_0, \beta_U, \beta_X) = (1 - \beta_A/2, 0.3, 0)\) to keep 
    $\eta_t$ in the same ballpark for all values of $\beta_A$;

    \item simulate an infection process $C_t$ from a binomial distribution with number of trials $E_{t-1}$ and success probability $p_C = 0.2$,
    and set $E_t = E_{t-1} + B_t - C_t$;

    \item simulate an removal process $D_t$ from a binomial distribution with number of trials $I_{t-1}$ 
    and success probability $p_D = 0.2$, and set $I_t = I_{t-1} + C_t - D_t$;
    
    \item and finally, simulate an observed time series $Y_t \equiv D_t$ for all $t$.
\end{enumerate}

For each value of $\beta_A$, we simulated 200 time series $Y_t$ from the SEIR model, and
for each time series, we estimated $(\beta_0, \beta_X, \beta_A)$ using Method 1, assuming 
the SEIR model as the DGM. 
{The other parameters were assumed to be known.}
Because we don't have a developed method for obtaining the MLE in this setting, 
we used a regressive approach. 
Specifically, we fitted the binomial regression model:
\begin{equation*}
    B_t \sim \mathrm{Binomial} \left(S_{t-1}, \exp(- \beta_0 - \beta_X X_t - \beta_A A_t + \log(I_{t-1}/N))\right),
\end{equation*}
where $\log(I_{t-1}/N)$ is an offset. 
This approximates the generation of the exposure process because \cref{eq::pB}~$ \approx \eta(\bar A_t, \beta)  I_{t-1} / N$,
since $N$ is much larger than the numerator.

\cref{fig::conf_int_ML_seir} shows the averages over 
the 200 simulations of the regressive estimates of $\beta_A$ for each true value of $\beta_A$, along 95\% confidence intervals. 
Phantom bias is evident.

Next we estimated $(\beta_0,\beta_A)$ by solving the estimating equations in \cref{eq::ee} 
using Method 3, 
assuming the SEIR model with $\eta(\bar A_t, \beta) = \exp(- \beta_0 - \beta_A A_t)$ as MSM,
{$h_t(\bar A_t) = (1, A_t)$, 
and modeling the numerator and denominator of the propensity score $W_t$ in \cref{eq::pp0}
using logistic regressions with linear AR(1) logit links, as in the previous example.}
Because
there is no closed-form expression for $\psi(\overline{a}_t, \theta)$ or its derivative 
$\frac{\partial}{\partial\beta} \psi(\overline{a}_t, \theta)$, we used
the Monte Carlo approximations in \cref{eq::monte_carlo} and \cref{ex::seir_derivative}, 
and solved the estimating equations using Newton's method initialized at the regressive estimates. 

\cref{fig::conf_int_EEnc_seir} shows the averages over 
the 200 simulations of the estimating equation estimates of 
$\beta_A$ for each true value of $\beta_A$, with 95\% confidence intervals. 
We no longer have phantom bias.

\begin{figure}[t!]
    \centering
    \setlength{\labelsep}{-2.5pt}
    \begin{minipage}{0.45\textwidth}
        \sidesubfloat[]{\label{fig::conf_int_ML_seir} 
        \includegraphics[width=0.95\textwidth]{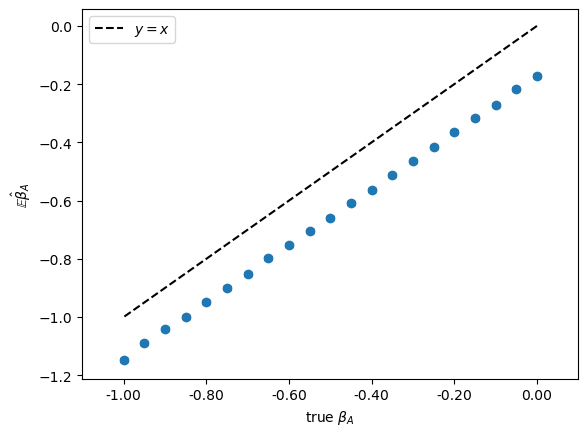}}
    \end{minipage}
    \hspace{0.2in}
    \begin{minipage}{0.45\textwidth}
        \sidesubfloat[]{\label{fig::conf_int_EEnc_seir} 
        \includegraphics[width=0.95\textwidth]{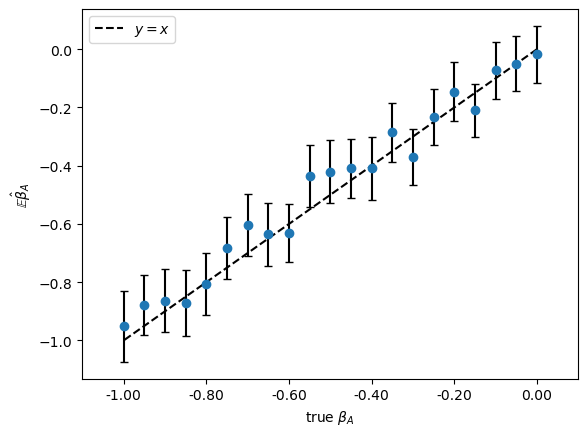}}
    \end{minipage}
    \vspace{-0.1in}
    \caption{\sl  {\bf Causal parameter estimation for SEIR epidemic model data.} (a) ML estimate (Method 1) 
    and (b) estimating equation estimate (Method 3) of \(\beta_A\) averaged across 200 repeat simulations (blue dots) with 95\%-confidence intervals, for a range of true $\beta_A$ values (the errors bars are too small to be seen in (a)). 
    There is phantom bias in (a) but not in (b).}
    \label{fig::conf_int_seir}
\end{figure}

\subsection{Effect of Mobility on COVID-19 Transmission} \label{sec::covid}
We analyzed the effect of a mobility measure on COVID-19 death data for U.S. states, 
using the dataset described in \citet{bong2024}. The data are sourced from the Delphi 
repository at Carnegie Mellon University (\url{delphi.cmu.edu}), and consist of 
daily observations from February 15 to August 1, 2020 (168 days). The dataset 
includes state-level records of COVID-19 deaths, denoted as $Y_t$, and a mobility 
measure $A_t$, “proportion of full-time work”, which represents the fraction of mobile 
devices that spent more than six hours at a location other than their home during daytime 
(using SafeGraph's \texttt{full\_time\_work\_prop}). We focused on the 30 states that 
reported more than 20 deaths on at least one day and truncated the time series 30 days 
prior to reaching a total of 10 accumulated deaths, following the procedure outlined in \citet{bhatt}. 
A preprocessing step was used to correct for the weekend effect, which shows fewer deaths 
reported on Saturdays and Sundays and, to compensate, more deaths reported on Mondays and Tuesdays
(see \citet{bong2024} for further details).

{We modeled these data using the augmented semi-mechanistic 
epidemic model in \cref{eq::bhatt,eq::bhatt_R} and seeding mechanism described
in \cref{sec::simulation_semi}. That is, we used the same DGM
as in \cref{sec::simulation_semi},
except that here, we do not have covariates $X_t$ and the treatment $A_t$ 
is continuous rather than binary.
\cref{sec::simulation_semi} also contains the 
details for obtaining the ML estimates of $(\mu, \beta_0, \beta_A)$ 
and estimating equation estimates 
of $(\beta_0, \beta_A)$, although, to estimate the propensity score,
we used AR(1) Gaussian regressions with identity linear links 
instead of logistic regressions,
since the treatment is continuous.}

\cref{fig::beta_EEnc_delphi} shows the estimates of $\beta_A$ 
for the 30 states.
The faint thick lines show the point estimates and 95\% confidence intervals
calculated separately for each state. These estimates can be improved by 
borrowing strength across states using a frequentist approach based on the 
robust empirical Bayes shrinkage method introduced by \citet{armstrong2022robust}, 
and extended to multivariate parameters by \citet{bong2024}. 
The dark thin estimates and intervals are the results of this procedure.

\cref{fig::beta_freqepid_delphi} shows the maximum likelihood estimates 
of $\beta_A$
from \citet{bong2024} based on the same model assumptions, and further
assuming $Y_t$ was negative binomial with mean in \cref{eq::bhatt}
to 
complete the DGM.

All estimates are positive and a few are not significantly different from zero, suggesting that 
reduced mobility lowered or had no significant effect on COVID-19 deaths.
However, there are substantial differences between ML and estimating equation estimates.
In 24 out of 30 states, the estimating equation estimates are lower than ML estimates. 
This result is statistically significant, assuming a binomial probability of 
$0.5$ for the two methods yielding smaller estimates equally across all 
states ($p < 0.001$). This suggests the possible presence of a phantom effect, 
leading to 
ML estimates overestimating the causal effect of mobility on COVID-19 deaths.

\begin{figure}[t!]
    \centering
    \setlength{\labelsep}{-2.5pt}
    \begin{minipage}{0.318\textwidth}
        \sidesubfloat[]{\label{fig::beta_EEnc_delphi} 
        \includegraphics[width=\textwidth]{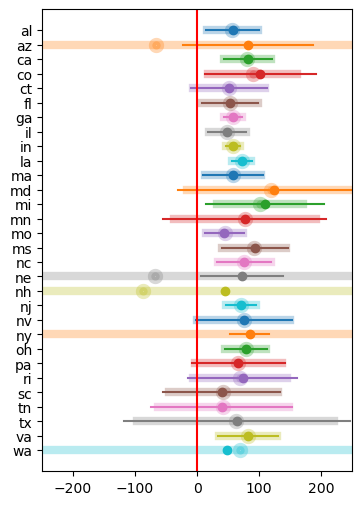}}
    \end{minipage}
    \hspace{0.5in}
    \begin{minipage}{0.318\textwidth}
        \sidesubfloat[]{\label{fig::beta_freqepid_delphi} 
        \includegraphics[width=\textwidth]{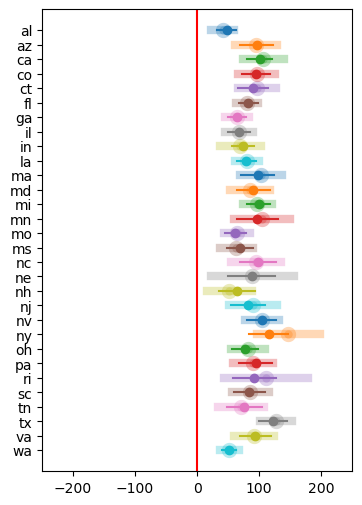}}
    \end{minipage}
    \vspace{-0.1in}
    \caption{\sl {\bf Effect of a mobility measure on COVID-19 death data for 30 U.S. state, 
    measured by $\beta_A$ in~\cref{eq::bhatt_R}.} Estimates and 95\% confidence intervals 
    using (a) the estimating equations in~\cref{eq::ee} (Method 3) and (b) maximum likelihood (Method 1). 
    The faint thick lines are the estimates and intervals before 
    shrinkage and the dark thin lines are the estimates after shrinkage.}
  \label{fig::beta_delphi}
\end{figure}

\section{Model Misspecification\label{sec::misspec}}

{To quote George E. P. Box, "All models are wrong."
Consider again the example in \cref{sec::simulation_semi}.
\cref{fig::conf_int_EEnc_bhatt} illustrated 
that obtaining the estimate for the causal 
parameter $\beta_A$ using estimating equations eliminated phantom bias.
However, the estimate is still a little biased. 
This is not unexpected since the MSM we used is 
not consistent with the model that generated the data, that is,
the MSM
is misspecified.
But the bias is much less than the bias of the 
MLE in \cref{fig::conf_int_ML_bhatt}, which suffers from
both misspecification bias (the likelihood we used is also misspecified
since it does not include the unobserved phantoms $\bar U_T$) and phantom bias.  
Using Methods 2 and 4 would allow us to illustrate phantom bias 
without model mispecification bias in a simulation study, but they are very onerous.
With Method 2, instead of assuming an MSM and deriving the
causal effect $\psi(\overline{a}_t;\theta)$ from it,
we would obtain
$\psi(\overline{a}_t;\theta)$
by applying the 
$g$-formula in \cref{eq::g3} to the joint distribution of 
$(\overline{U}_T,\overline{X}_T,\overline{A}_T,\overline{Y}_T)$
we used to simulate the data. We would need to proceed by Monte Carlo as in 
\cref{eq::monte_carlo}, which would be computationally very costly.
With Method 4, we would construct a joint
distribution $P$ for $(\overline{U}_T,\overline{X}_T,\overline{A}_T,\overline{Y}_T)$ such
that, when the $g$-formula is applied to $P$, we would get back
the causal effect $\psi(\overline{a}_t;\theta)$
derived from our assumed MSM. This can be done as described in
\cref{section::full}, using the construction due to \cite{evans2024}.
However, doing so is quite complicated in this setting.  }

{
Notice that to proceed with Methods 2 and 4 as we just explained would require
that the 
phantoms $\overline{U}_T$ be modeled. This is feasible in a simulation study, 
but not with real data since $\overline{U}_T$
are not observed.
Therefore, in practice, all methods, including Methods 2 and 4, 
will suffer some degree of model misspecification.
However, only Methods 2, 3 and 4 eliminate phantom bias. Method 1 does not.
}

{More generally, with all statistical models,
there is always the danger of model misspecification.
But here, the role of
model misspecification
is quite different
for MSM's and DGM's.}
{
If a MSM is misspecified
then 
the estimating equations
at least give us
us an estimate of the projection of
$\E[Y_t(\overline{a}_t)]$
onto the causal model,
as discussed in
\cite{neugebauer2007, diaz}.
}
{
The misspecification
for DGM's is more insidious.
As we have seen when discussing Method 1,
essentially no sequentially parameterized DGM
contains any distribution that
has correlation between outcome and treatment
while having no causal effect.
This is a fundamental structural limitation of these models
unrelated to not being able to fit the data well.
}

\section{Conclusion\label{sec::conclusion}}

To assess the effect of interventions,
one can
add an intervention
variable to an epidemic model.
There are two interpretations of this augmented model:
it is a causal model for a counterfactual
or it is a data generating model for the
observed variables.
In the literature, these have often been 
treated interchangeably but,
in general, these are not the same.
How we estimate the parameters
depends on which interpretation we use.
We have discussed
four approaches.
See Table \ref{table::4methods}.
Here we summarize the advantages and disadvantages of each.

\begin{table}[t!]
{
\begin{tabular}{l|ll}
Method & Algorithms          & Specification\\ \hline
I      & mle                 & joint distribution for $(\overline{X}_t,\overline{A}_t,\overline{Y}_t)$\\
II     & estimating equations & joint distribution for $(\overline{X}_t,\overline{A}_t,\overline{Y}_t)$\\
III    & estimating equations & MSM for $\overline{Y}_t(\overline{a}_t)$, propensity score for $\overline{A}_t$ given the past\\
IV     & mle                 & MSM for $\overline{Y}_t(\overline{a}_t)$ and copula for other variables
\end{tabular}
}
\caption{{\sl Summary of methods. In addition to the above, all the methods require
the three causal assumptions (no interference, positivity and no unmeasured confounding).}}
\label{table::4methods}
\end{table}

Method 1:
Use the augmented epidemic model as a data generating model,
use maximum likelihood or a Bayesian approach to estimate its parameters,
and then estimate the causal effect.
This 
leads to inconsistent estimates
and the $g$-null paradox.
In particular, it leads to non-zero estimates of the causal effect
even when there is no causal affect.

Method 2:
Use the augmented epidemic model as a data generating model and
apply the $g$-formula to that DGM to 
extract the causal model.
Also use the DGM to derive the propensity score.
Then use estimating equations
to estimate the parameters.
This avoids the $g$-null paradox but the 
method is quite cumbersome and the causal model is not 
easily interpretable.
An additional problem is that 
we have to divide by the propensity score,
which can cause the parameter estimates to
have large variances if the propensity score
gets small.

Method 3: 
Use the augmented epidemic model as a marginal structural model.
Postulate a model for the propensity score.
Then use estimating equations
to estimate the model parameters.
This is the simplest approach, and because it treats the augmented epidemic model as a 
model of how the intervention affects the outcome, it
facilitates the interpretation of
treatment on outcome.
But like Method 2, we have to divide by the propensity score,
so parameter estimates can have large variances.

Method 4:
Use the augmented epidemic model as a marginal structural model.
Construct a model for the data generating process that is consistent 
with the marginal structural model, and then use maximum likelihood
to estimate the model parameters.
The advantages of this method are that it avoids the $g$-null paradox and
dividing by the
propensity score. 
The disadvantages are that it is complicated and
requires more modeling assumptions.
The method also requires more investigation.

{
Methods 2, 3 and 4 avoid phantom bias.
If we have a good understanding of the full data generating process, 
then building a DGM for Method 2 may offer a more principled 
and realistic generative perspective. But Method 2 does not allow 
one to specify a model
for the treatment effect. Rather, the treatment effect is implied by
the $g$-formula, will not typically be available in closed form
and will not be easily interpretable.
Method 3 does not need the full DGM but still requires a
model for the propensity score, so at least part of what we understand 
about the data generating process can be incorporated there.
But unlike Method 2, Method 3 allows us to specify an interpretable 
causal model for the treatment effect. In Method 4, as in Method 3, we explicitly state the model for 
the effect of treatment, and we then build a full DGM around that 
specification. 
So Method 4 is interpretable and arguably more principled than Method 3.
However, Method 4 is quite new
and there is not yet a lot of experience that would
inform the practical aspects of implementing it
especially in the case of a single time series.
In particular, constructing and estimating the copula model
is non-trivial. So Method 4 seems promising from the 
aspect of interpretability and in terms of avoiding phantom bias
but more work is needed to see how practical it is.}

Whichever approach one uses,
it is important to distinguish
causal models and
data generating models.
And, when estimating 
parameters, it is important to account
for confounding.
No matter how many confounders we include
in an analysis, there is always the danger
that there are important unobserved confounders.
There are some methods for
dealing with unobserved confounding.
One of the oldest is to include
instrumental variables which are variables that
affect intervention but do not directly affect the outcome
\citep{greenland2000}.
More recently, there has been a surge of interest
in using negative controls, which are variables unaffected by intervention,
as a way to control for
unobserved confounding
\citep{tchetgen2024}.
We will report on these methods as applied to epidemic modeling
in future work.

{
Although completely nonparametric inference is not possible
for a single time series,
a referee suggested that using a partially nonparametric
data generating model might reduce phantom bias and misspecification bias
and might still be estimable.
This is a reasonable suggestion that deserves investigation.
}

{
Finally, let us mention that another interesting direction
would be to develop structural nested models (SNM)
\citep{vansteelandt2014structural, robins2000b}
for epidemics.
These models allow one to model
effect modification using
models for counterfactual contrasts that include covariates.
An example of a structural nested model in a simple setting is
$$
\E[Y(a)|X=x,A=a] - \E[Y(0)|X=x,A=a] = a(\psi_0 + \psi_1 x).
$$
The causal parameter
$\psi$ can be estimated using appropriate estimating equations.
These models also exist for time varying setting.
Building SNM's that somehow
incorporate epidemic models
is an open problem that deserves further investigation.
}

\bibliographystyle{apalike}
\bibliography{2_ref.bib}

@article{bates2022causal,
  title={Causal Inference with Orthogonalized Regression: Taming the Phantom},
  author={Bates, Stephen and Kennedy, Edward and Tibshirani, Robert and Ventura, Val\'erie and Wasserman, Larry},
  journal={arXiv preprint arXiv:2201.13451},
  year={2022}
}

@article{bhatt,
  title={Semi-mechanistic Bayesian modelling of {COVID-19} with renewal processes},
  author={Bhatt, Samir and Ferguson, Neil and Flaxman, Seth and Gandy, Axel and Mishra, Swapnil and Scott, James A},
  journal={Journal of the Royal Statistical Society Series A: Statistics in Society},
  volume={186},
  number={4},
  pages={601--615},
  year={2023},
  publisher={Oxford University Press US}
}

@article{robins2000marginal,
    title = {Marginal structural models and causal inference in epidemiology},
    volume = {11},
    issn = {1044-3983},
    number = {5},
    pages = {550--560},
    journal = {Epidemiology},
    author = {Robins, James M. and Hern{\'a}n, Miguel A and Brumback, Babette},
    year = {2000},
}

@article{robins1986new,
    title = {A new approach to causal inference in mortality studies with a sustained exposure period—application to control of the healthy worker survivor effect},
    journal = {Mathematical Modelling},
    volume = {7},
    number = {9},
    pages = {1393-1512},
    year = {1986},
    issn = {0270-0255},
    author = {James M Robins},
}

@article{kermack1927contribution,
  title={A contribution to the mathematical theory of epidemics},
  author={Kermack, William Ogilvy and McKendrick, Anderson G},
  journal={Proceedings of the royal society of london. Series A, Containing papers of a mathematical and physical character},
  volume={115},
  number={772},
  pages={700--721},
  year={1927},
  publisher={The Royal Society London}
}

@inproceedings{rw,
    author = {Robins, James M. and Wasserman, Larry},
    title = {Estimation of effects of sequential treatments by reparameterizing directed acyclic graphs},
    year = {1997},
    isbn = {1558604855},
    publisher = {Morgan Kaufmann Publishers Inc.},
    address = {San Francisco, CA, USA},
    booktitle = {Proceedings of the Thirteenth Conference on Uncertainty in Artificial Intelligence},
    pages = {409–420},
    numpages = {12},
    location = {Providence, Rhode Island},
    series = {UAI'97}
}

@article{vansteelandt2014structural,
  title={Structural nested models and G-estimation: the partially realized promise},
  author={Vansteelandt, Stijn and Joffe, Marshall and others},
  journal={Statistical Science},
  volume={29},
  number={4},
  pages={707--731},
  year={2014},
  publisher={Institute of Mathematical Statistics}
}

@article{bonvini2021causal,
	title={Causal Inference in the Time of {Covid-19}},
	author={Bonvini, Matteo and Kennedy, Edward and Ventura, Val\'erie and Wasserman, Larry},
	journal={arXiv preprint arXiv:2103.04472},
	year={2021}
}

@article{neugebauer2007,
  title={Nonparametric causal effects based on marginal structural models},
  author={Neugebauer, Romain and van der Laan, Mark},
  journal={Journal of Statistical Planning and Inference},
  volume={137},
  number={2},
  pages={419--434},
  year={2007},
  publisher={Elsevier}
}

@article{diaz,
  title={Non-parametric efficient estimation of marginal structural models with multi-valued time-varying treatments},
  author={Martin, Axel and Santacatterina, Michele and D{\'\i}az, Iv{\'a}n},
  journal={arXiv preprint arXiv:2409.18782},
  year={2024}
}

@article{newey1987simple,
  title={A Simple, Positive Semi-definite, Heteroskedasticity and Autocorrelation Consistent Covariance Matrix},
  author={Newey, Whitney K and West, Kenneth D and others},
  journal={Econometrica},
  volume={55},
  number={3},
  pages={703--708},
  year={1987},
  publisher={Econometric Society}
}

@article{armstrong2022robust,
  title={Robust empirical bayes confidence intervals},
  author={Armstrong, Timothy B and Koles{\'a}r, Michal and Plagborg-M{\o}ller, Mikkel},
  journal={Econometrica},
  volume={90},
  number={6},
  pages={2567--2602},
  year={2022},
  publisher={Wiley Online Library}
}

@article{babino2019multiple,
  title={Multiple robust estimation of marginal structural mean models for unconstrained outcomes},
  author={Babino, Lucia and Rotnitzky, Andrea and Robins, James},
  journal={Biometrics},
  volume={75},
  number={1},
  pages={90--99},
  year={2019},
  publisher={Wiley Online Library}
}

@article{evans2024,
  title={Parameterizing and simulating from causal models},
  author={Evans, Robin J and Didelez, Vanessa},
  journal={Journal of the Royal Statistical Society Series B: Statistical Methodology},
  volume={86},
  number={3},
  pages={535--568},
  year={2024},
  publisher={Oxford University Press UK}
}

@article{dejong1997,
  title={Central limit theorems for dependent heterogeneous random variables},
  author={De Jong, Robert M},
  journal={Econometric Theory},
  volume={13},
  number={3},
  pages={353--367},
  year={1997},
  publisher={Cambridge University Press}
}

@article{dejong2000,
  title={Consistency of kernel estimators of heteroscedastic and autocorrelated covariance matrices},
  author={De Jong, Robert M and Davidson, James},
  journal={Econometrica},
  volume={68},
  number={2},
  pages={407--423},
  year={2000},
  publisher={JSTOR}
}

@article{andrews1991,
  title={An empirical process central limit theorem for dependent non-identically distributed random variables},
  author={Andrews, Donald WK},
  journal={Journal of Multivariate Analysis},
  volume={38},
  number={2},
  pages={187--203},
  year={1991},
  publisher={Elsevier}
}

@article{andrews1988,
  title={Laws of large numbers for dependent non-identically distributed random variables},
  author={Andrews, Donald WK},
  journal={Econometric theory},
  volume={4},
  number={3},
  pages={458--467},
  year={1988},
  publisher={Cambridge University Press}
}

@article{gibson1998estimating,
  title={Estimating parameters in stochastic compartmental models using {M}arkov chain methods},
  author={Gibson, Gavin J and Renshaw, Eric},
  journal={Mathematical Medicine and Biology: A Journal of the IMA},
  volume={15},
  number={1},
  pages={19--40},
  year={1998},
  publisher={OUP}
}

@book{mode2000stochastic,
  title={Stochastic processes in epidemiology: HIV/AIDS, other infectious diseases and computers},
  author={Mode, Charles J and Sleeman, Candace K},
  year={2000},
  publisher={World Scientific}
}

@book{whatif,
title={Causal Inference: What If.},
author={MA Hernan and JM Robins},
year ={2020},
publisher={Chapman \& Hall/CRC}
}

@book{imbens2015,
  title={Causal inference in statistics, social, and biomedical sciences},
  author={Imbens, Guido W and Rubin, Donald B},
  year={2015},
  publisher={Cambridge university press}
}

@book{pearl2009,
  title={Causality},
  author={Pearl, Judea},
  year={2009},
  publisher={Cambridge university press}
}

@article{cai2024causal,
  title={Causal estimands and identification of time-varying effects in non-stationary time series from N-of-1 mobile device data},
  author={Cai, Xiaoxuan and Zeng, Li and Fowler, Charlotte and Dixon, Lisa and Ongur, Dost and Baker, Justin T and Onnela, Jukka-Pekka and Valeri, Linda},
  journal={arXiv preprint arXiv:2407.17666},
  year={2024}
}

@article{halloran1995,
 ISSN = {10443983},
 author = {M. Elizabeth Halloran and Claudio J. Struchiner},
 journal = {Epidemiology},
 number = {2},
 pages = {142--151},
 publisher = {Lippincott Williams & Wilkins},
 title = {Causal Inference in Infectious Diseases},
 urldate = {2025-04-02},
 volume = {6},
 year = {1995}
}

@article{bonvini2022,
  title={Causal inference for the effect of mobility on {COVID-19} deaths},
  author={Bonvini, Matteo and Kennedy, Edward H and Ventura, Val\'erie and Wasserman, Larry},
  journal={The Annals of Applied Statistics},
  volume={16},
  number={4},
  pages={2458--2480},
  year={2022},
  publisher={Institute of Mathematical Statistics}
}

@article{ackley2017,
  title={Compartmental model diagrams as causal representations in relation to {DAG}s},
  author={Ackley, Sarah F and Mayeda, Elizabeth Rose and Worden, Lee and Enanoria, Wayne TA and Glymour, M Maria and Porco, Travis C},
  journal={Epidemiologic methods},
  volume={6},
  number={1},
  pages={20160007},
  year={2017},
  publisher={De Gruyter}
}

@article{feng2024,
  title={Parallel Trends in an Unparalleled Pandemic: Difference-in-differences for infectious disease policy evaluation},
  author={Feng, Shuo and Bilinski, Alyssa},
  journal={medRxiv},
  pages={2024--04},
  year={2024},
  publisher={Cold Spring Harbor Laboratory Press}
}

@article{greenland2000,
  title={An introduction to instrumental variables for epidemiologists},
  author={Greenland, Sander},
  journal={International journal of epidemiology},
  volume={29},
  number={4},
  pages={722--729},
  year={2000},
  publisher={Oxford University Press}
}

@article{tchetgen2024,
  title={An introduction to proximal causal inference},
  author={Tchetgen Tchetgen, Eric J and Ying, Andrew and Cui, Yifan and Shi, Xu and Miao, Wang},
  journal={Statistical Science},
  volume={39},
  number={3},
  pages={375--390},
  year={2024},
  publisher={Institute of Mathematical Statistics}
}

@article{Wodtke2020, 
   title={Residual Balancing: A Method of Constructing Weights for Marginal Structural Models}, 
   volume={28}, 
   DOI={10.1017/pan.2020.2}, 
   number={4}, 
   journal={Political Analysis}, 
   author={Zhou, Xiang and Wodtke, Geoffrey T.}, 
   year={2020}, pages={487–506}}

@book{joe2014,
  title={Dependence modeling with copulas},
  author={Joe, Harry},
  year={2014},
  publisher={CRC press}
}

@InProceedings{robins2000b,
    author="Robins, James M.",
    editor="Halloran, M. Elizabeth
    and Berry, Donald",
    title="Marginal Structural Models versus Structural nested Models as Tools for Causal inference",
    booktitle="Statistical Models in Epidemiology, the Environment, and Clinical Trials",
    year="2000",
    publisher="Springer New York",
    address="New York, NY",
    pages="95--133",
    isbn="978-1-4612-1284-3"
}

@article{kennedy2015,
  title={Optimal restricted estimation for more efficient longitudinal causal inference},
  author={Kennedy, Edward H and Joffe, Marshall M and Small, Dylan S},
  journal={Statistics \& probability letters},
  volume={97},
  pages={185--191},
  year={2015},
  publisher={Elsevier}
}

@article{lin2025,
  title={Exact Simulation of Longitudinal Data from Marginal Structural Models},
  author={Lin, Xi and Manela, Daniel de Vassimon and Mathis, Chase and Tarp, Jens Magelund and Evans, Robin J},
  journal={arXiv preprint arXiv:2502.07991},
  year={2025}
}

@article{bong2024,
  title={Frequentist inference for semi-mechanistic epidemic models with interventions},
  author={Bong, Heejong and Ventura, Val{\'e}rie and Wasserman, Larry},
  journal={Journal of the Royal Statistical Society Series B: Statistical Methodology},
  pages={qkae110},
  year={2024},
  publisher={Oxford University Press UK}
}

@article{feller2015,
  title={Hierarchical models for causal effects},
  author={Feller, Avi and Gelman, Andrew},
  journal={Emerging trends in the social and behavioral sciences},
  volume={1},
  pages={16},
  year={2015},
  publisher={John Wiley \& Sons, Ltd}
}

@article{weinstein2024,
  title={Hierarchical causal models},
  author={Weinstein, Eli N and Blei, David M},
  journal={arXiv preprint arXiv:2401.05330},
  year={2024}
}

@article{grigorieva2015optimal,
  title={Optimal intervention strategies for a SEIR control model of Ebola epidemics},
  author={Grigorieva, Ellina V and Khailov, Evgenii N},
  journal={Mathematics},
  volume={3},
  number={4},
  pages={961--983},
  year={2015},
  publisher={MDPI}
}

@article{lekone2006statistical,
  title={Statistical inference in a stochastic epidemic SEIR model with control intervention: Ebola as a case study},
  author={Lekone, Phenyo E and Finkenst{\"a}dt, B{\"a}rbel F},
  journal={Biometrics},
  volume={62},
  number={4},
  pages={1170--1177},
  year={2006},
  publisher={Oxford University Press}
}

@article{GIFFIN2022100711,
title = {Estimating intervention effects on infectious disease control: The effect of community mobility reduction on Coronavirus spread},
journal = {Spatial Statistics},
volume = {52},
pages = {100711},
year = {2022},
issn = {2211-6753},
doi = {https://doi.org/10.1016/j.spasta.2022.100711},
url = {https://www.sciencedirect.com/science/article/pii/S2211675322000720},
author = {Andrew Giffin and Wenlong Gong and Suman Majumder and Ana G. Rappold and Brian J. Reich and Shu Yang}
}

@article{n2022effect,
  title={Effect of time-varying adherence to non-pharmaceutical interventions on the occurrence of multiple epidemic waves: A modeling study},
  author={N'konzi, Joel-Pascal Ntwali and Chukwu, Chidozie Williams and Nyabadza, Farai},
  journal={Frontiers in Public Health},
  volume={10},
  pages={1087683},
  year={2022},
  publisher={Frontiers Media SA}
}

@article{barrero2025,
  title={Causal inference concepts can guide research into the effects of climate on infectious diseases},
  author={Barrero Guevara, Laura Andrea and Kramer, Sarah C and Kurth, Tobias and Domenech de Cell{\`e}s, Matthieu},
  journal={Nature Ecology \& Evolution},
  volume={9},
  number={2},
  pages={349--363},
  year={2025},
  publisher={Nature Publishing Group UK London}
}

@article{callaway2023policy,
  title={Policy evaluation during a pandemic},
  author={Callaway, Brantly and Li, Tong},
  journal={Journal of Econometrics},
  volume={236},
  number={1},
  pages={105454},
  year={2023},
  publisher={Elsevier}
}

@article{chernozhukov2020,
  title={Causal impact of masks, policies, behavior on early covid-19 pandemic in the US},
  author={Chernozhukov, Victor and Kasahara, Hiroyuki and Schrimpf, Paul},
  journal={Journal of econometrics},
  volume={220},
  number={1},
  pages={23},
  year={2020}
}

@article{flaxman2020,
  title={Estimating the effects of non-pharmaceutical interventions on COVID-19 in Europe},
  author={Flaxman, Seth and Mishra, Swapnil and Gandy, Axel and Unwin, H Juliette T and Mellan, Thomas A and Coupland, Helen and Whittaker, Charles and Zhu, Harrison and Berah, Tresnia and Eaton, Jeffrey W and others},
  journal={Nature},
  volume={584},
  number={7820},
  pages={257--261},
  year={2020},
  publisher={Nature Publishing Group UK London}
}

@article{tomaselli2025,
  title={Robust Simulation Based Inference},
  author={Tomaselli, Lorenzo and Ventura, Val{\'e}rie and Wasserman, Larry},
  journal={arXiv preprint arXiv:2508.02404},
  year={2025}
}

@article{pan2018,
  title={Propensity score methods for causal inference: an overview},
  author={Pan, Wei and Bai, Haiyan},
  journal={Behaviormetrika},
  volume={45},
  number={2},
  pages={317--334},
  year={2018},
  publisher={Springer}
}

\clearpage
\appendix
\counterwithin{figure}{section}
\counterwithin{equation}{section}

\section{Appendix}

\begin{figure}[h!]
    \centering
    \includegraphics[height=0.315\textwidth]{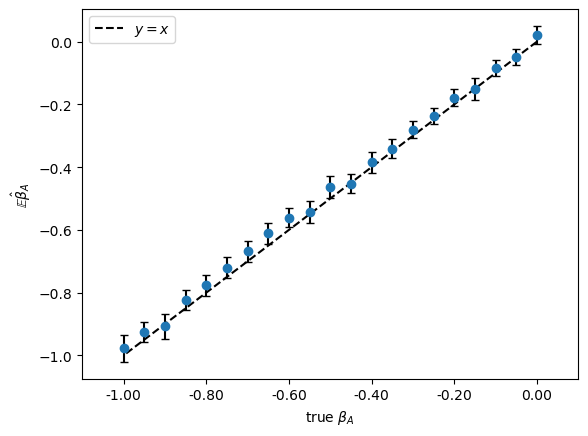}
    \caption{\textbf{Point estimates (blue dots) and 95\%-confidence intervals (error bars) of \(\beta_A\) from the ML estimates without phantom variables .} For each true \(\beta_A\) value, the point estimate and confidence interval were obtained from 200 i.i.d. estimates \(\hat{\beta}_A\). The ML estimates are unbiased when phantom variables are absent.}
    \label{fig::conf_int_non_phantom_betas}
\end{figure}

\end{document}